\newtheorem{theorem}{Theorem}
\newtheorem{lemma}[theorem]{Lemma}
\newtheorem{proposition}[theorem]{Proposition}
\theoremstyle{definition}
\newtheorem{definition}[theorem]{Definition}
\theoremstyle{remark}
\newtheorem{remark}[theorem]{Remark}
\numberwithin{equation}{section}\numberwithin{theorem}{section}
\def \<{\langle}
\def \>{\rangle}
\newcommand{\1}{{(1)}}
\newcommand{\C}{\mathbb{C}}
\newcommand{\N}{\mathbb{N}}
\newcommand{\Z}{\mathbb{Z}}
\def\noi{\noindent}
\newtheorem*{corollary*}{Corollary}
\newtheorem*{remark*}{Remark}
\newtheorem*{remarks*}{Remarks}
\begin{document}

\title{Quasifinite representations of the Lie superalgebra of quantum pseudo 
differential operators}

\author{Carina Boyallian\thanks{The  first author was supported in
part by
Conicet, ANPCyT, Agencia Cba Ciencia, and Secyt-UNC (Argentina).}
\  and Vanesa Meinardi}

\maketitle




\begin{abstract} In this paper we extend general results obtained in 
\cite{1} for quasifinite highest weight representations of $\Z$-graded 
Lie algebras to $\frac{1}{2}\Z$-graded Lie superalgebras, and we apply these 
  to classify the irreducible quasifinite highest weight modules of the Lie 
superalgebra of  quantum pseudo-differential operators.
\end{abstract}

\section{Introduction}

The study of Lie superalgebras and its representations plays an
important roll in Conformal Field Theory and Supersymmetries in
physics.

The main difficulty to develop a suitable representation theory
for certain $\Z$-graded Lie superalgebras lies in the fact that
the graded subspaces of some highest weight modules over these Lie
superalgebras are  infinite dimensional, in spite of having a
natural principal gradation and a triangular decomposition.
However, most of the physical theories usually require that these
subspaces have finite dimension, to which we refer as {\it
quasifinitness}.

In \cite{2}, Kac and Radul developed a powerful machinery and
begun the systematic study of  quasifinite representations of the
Lie algebra of differential operators on the circle and the Lie
algebra of quantum pseudo differential operators.

Following this  work, there have been later developments and many 
extensions as in  ([3],[4],[1],[5], etc). Moreover, the results in 
[2], were extended to the study for the quasifinite representations of the 
Lie superalgebra of differential operators on the supercircle in [6] and 
its  subalgebras in [7].

In [1], they developed a general theory that characterize the
quasifinite highest weight representation of any $\Z$-graded Lie
algebra, under some mild conditions.

In the first part of this article we extend this  results for Lie
superalgebras. Then using this, we classify the quasifinite
highest weight representation of the Lie superalgebra of quantum
pseudo differential operators. Observe that the extension of these
results are useful to simplify some computations made in [7] and
[8].

In order to give a realization of these representations in terms of tensor
products of quasifinite representations of the Lie superalgebra of
infinite matrices with a finite number of non-zero diagonals with
coefficients in the truncated polynomials, we need  to
characterize them, using the extension of the results in [1].

\section { Quasifinite representations of graded Lie superalgebras }

Recall that a \textit{superalgebra}  is a $\mathbb{Z}_2$-graded
algebra. A   \textit{Lie superalgebra } is a superalgebra
$\frak{g}=\frak g_{\bar{0}}\bigoplus \frak g_{\bar{1}}$ ,
$(\overline{0},\overline{1}\in \Z_2)$, with multiplication given
by a \textit{ super bracket} $[\,,\,]$ satisfying:
$$[a,b]=-(-1)^{\bar{\alpha}\bar{\beta}}[b,a]$$
$$[a,[b,c]]=[[a,b],c]+(-1)^{\bar{\alpha}\bar{\beta}}[b,[a,c]],$$
for all $a\in {\frak{g}_{\bar{\alpha}}}$,
$b\in \frak{g}_{\bar{\beta}}$ with $\bar{\alpha}$ and $\bar{\beta}\in\Z_2$.

Let $\frak{g}$ be $\frac{1}{2}\Z$-graded Lie superalgebra over $\mathbb{C}$, 
namely
$${\frak g=\bigoplus_{j\in \frac {1}{2}\Z}{ \frak g_j}}\quad
\hbox{ and }\quad[\frak g_i,\frak g_j]\subseteq\frak
g_{i+j} \hbox{ with } i,\,j \in \frac{1}{2}\mathbb{Z}.$$ The
$\frac{1}{2}\mathbb{Z}$-gradation of a Lie superalgebra is
\textit{consistent} with the
$\textit{Z}_2$-gradation if
$$
\frak g_{\bar{0}}=\bigoplus_{i\in \Z}{\frak g_{i}} \hbox{ and}
\quad
\frak g_{\bar{1}}=\bigoplus_{i\in \Z}\frak {g}_{i+\frac{1}{2}}.
$$
For a $\frac{1}{2}\mathbb{Z}$-graded Lie superalgebra $\frak{g}$,
set
$$
\frak{g}_+=\bigoplus_{i\in \frac{1}{2}{\Z}_{>0}}\frak{g}_i , \quad
\frak{g}_-=\bigoplus_{i\in\frac {1}{2}{\Z}_{>0}}{\frak{g}_{-i}}.
$$
In this section  $ \frak g$ will denote a consistent
$\frac{1}{2}\Z$-graded Lie superalgebra.

\begin{definition} A {subalgebra} $\frak  p$ of $\frak g$ is called 
\textit{parabolic }if it contains  ${\frak{g}_0}\oplus{\frak{g}_+}$ as a 
proper subalgebra, that is,
$$
\frak{p}=\bigoplus_{j \in \frac{1}{2}\Z} \frak{p}_j,\, \hbox{
where } \, \frak{p}_j= \frak{g}_j \, \quad\hbox{for}
\quad j\geq
{0}
\, \hbox{ and } \,\frak p_{-j}\neq {0} \, \hbox{ for some }\,
{j}>{0}.
$$

\end{definition}

\noindent We assume the following properties on  $\frak g:$

\,

$(SP_1)$ $\frak g_0$ is commutative,

\,

$(SP_2)$ If ${a} \in  \frak g_{-k}$ $({k}>{0})$, \,and $[a,\frak
g_{\frac{1}{2}}]=0,$ then $a=0.$

\begin{remark}\label{lemm p1} As an immediate consequence of the definition 
of parabolic subalgebra and condition ($SP_2$), if $ \frak p$  is any parabolic 
subalgebra of $\frak g$ with $\frak p_{-k}\neq 0$ $(k>0),$ then 
$\frak{p}_{-k+{\frac{1}{2}}}\neq 0$.
\end{remark}

Given $a\in \frak{g}_{-{\frac{1}{2}}}$, ${{a}\neq{0}}$, we define
\,$\frak{p}^{a}=\bigoplus_{j\in
{\frac{1}{2}}\Z}{\frak{p}_{j}^{a}},$ where
\begin{align}
&\quad\quad\frak{p}_j^{a}=\frak{g}_j  \quad\hbox{ for all } \quad{j\geq 
0},\nonumber\\
&\quad\quad
\frak{p}_{-\frac{1}{2}}^{a}=\sum[\cdots[[a,\frak{g}_0],\frak{g}_0],\cdots]\nonumber\\
&\hbox{
and}\,\quad{{\frak{p}_{-k-\frac{1}{2}}^{a}=[\frak{p}_{-\frac{1}{2}}^{a},
\frak{p}_{-k}^{a}]}}.
\end{align}

We have the following Lemma, whose proof we shall omit since it is
identical to the proof of Lemma 2.2 in [1] with the obvious
modifications.

\begin{lemma}\label{lemm p2} Let $a \in \frak g_{-\frac {1}{2}}, a\neq 0.$ 
Then:

(a) $\frak{p}^{a}$ is the minimal parabolic superalgebra containing $a$.

(b) $\frak g_0^{a}:={[\frak p^{a},\frak p^{a}]}\bigcap{\frak
g_0}=[a,\frak g_{\frac{1}{2}}].$
\end{lemma}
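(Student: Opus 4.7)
The plan is to prove (a) in three stages—(i) $\frak{p}^a$ is a graded subalgebra, (ii) it is parabolic, (iii) it is contained in any parabolic subalgebra containing $a$—and then to treat (b) by combining the recursive structure of $\frak{p}^a$ with the super Jacobi identity and the hypotheses $(SP_1)$ and $(SP_2)$.

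For (i), the nonnegative part $\bigoplus_{j\geq 0}\frak{g}_j$ is already a subalgebra. The remaining brackets to verify are $[\frak{g}_j,\frak{p}^a_{-k}]\subseteq\frak{p}^a_{j-k}$ for $j\geq 0$, $k>0$, and $[\frak{p}^a_{-k},\frak{p}^a_{-\ell}]\subseteq\frak{p}^a_{-k-\ell}$. I would prove both by induction on $k$ (or $k+\ell$) in $\tfrac{1}{2}\Z_{>0}$. The base cases $k=\tfrac{1}{2}$ come directly from the definition of $\frak{p}^a_{-1/2}$ together with commutativity of $\frak{g}_0$; the inductive step is a straightforward application of super Jacobi to $[\frak{g}_j,[\frak{p}^a_{-1/2},\frak{p}^a_{-k+1/2}]]$, which splits into two brackets that fall under the inductive hypothesis. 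The parabolicity in (ii) is then immediate: $\frak{p}^a$ contains $\frak{g}_0\oplus\frak{g}_+$ by construction, and properly so since $0\neq a\in\frak{p}^a_{-1/2}$.

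For the minimality (iii), let $\frak{q}$ be any parabolic subalgebra containing $a$. Since $\frak{q}_j=\frak{g}_j$ for $j\geq 0$, the subalgebra $\frak{q}$ is closed under bracketing with $\frak{g}_0$, so all iterated brackets $[\cdots[[a,\frak{g}_0],\frak{g}_0],\cdots]$ lie in $\frak{q}$; this shows $\frak{p}^a_{-1/2}\subseteq\frak{q}$. The recursive definition $\frak{p}^a_{-k-1/2}=[\frak{p}^a_{-1/2},\frak{p}^a_{-k}]$ then gives $\frak{p}^a_{-k}\subseteq\frak{q}$ for every $k>0$ by induction.

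For (b), the inclusion $[a,\frak{g}_{1/2}]=[\frak{p}^a_{-1/2},\frak{p}^a_{1/2}]\subseteq[\frak{p}^a,\frak{p}^a]\cap\frak{g}_0$ is obvious. The reverse inclusion is the main obstacle: any element of $[\frak{p}^a,\frak{p}^a]\cap\frak{g}_0$ is a sum of brackets $[x,y]$ with $x\in\frak{p}^a_{-j}$, $y\in\frak{g}_j$, $j>0$ (terms with $j=0$ vanish by $(SP_1)$). I would reduce this to $[a,\frak{g}_{1/2}]$ in two steps. First, for $j=\tfrac{1}{2}$, I argue by induction on the length of an iterated bracket $x=[\cdots[[a,g_1],g_2],\cdots,g_n]$ with $g_i\in\frak{g}_0$: applying super Jacobi gives $[x,y]=[[x',g_n],y]=[x',[g_n,y]]-[g_n,[x',y]]$, and the second term vanishes because $[x',y]\in\frak{g}_0$ and $\frak{g}_0$ is commutative, while the first term lies in $[\frak{p}^a_{-1/2},\frak{g}_{1/2}]$ and by induction in $[a,\frak{g}_{1/2}]$. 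Second, for general $j=k+\tfrac{1}{2}$, I use $x=[x_1,x_2]$ with $x_1\in\frak{p}^a_{-1/2}$, $x_2\in\frak{p}^a_{-k}$, and super Jacobi pushes $y$ inside so that $[x,y]$ becomes a sum of brackets of an element of $\frak{p}^a_{-1/2}$ with an element of $\frak{g}_{1/2}$, reducing to the first case. The careful bookkeeping of the super signs under the recursive definition is the delicate point; the hypothesis $(SP_2)$ will be used implicitly via Remark~\ref{lemm p1} to guarantee that each $\frak{p}^a_{-k+1/2}$ is nontrivial and the recursion is well posed.
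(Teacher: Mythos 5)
Your argument is correct and is essentially the adaptation of the proof of Lemma 2.2 in [1] that the paper explicitly invokes and omits: a joint induction showing $\frak{p}^a$ is closed under brackets, and the Jacobi/$(SP_1)$ reduction of each $[\frak{p}^a_{-j},\frak{g}_j]$ to $[a,\frak{g}_{\frac{1}{2}}]$ by induction on $j$ in half-integer steps. One minor correction: $(SP_2)$ and Remark~\ref{lemm p1} are not actually needed here, since parabolicity of $\frak{p}^a$ only requires some negative component to be nonzero, which $a\in\frak{p}^a_{-\frac{1}{2}}\setminus\{0\}$ already provides, and the recursion defining $\frak{p}^a_{-k-\frac{1}{2}}$ is well posed whether or not these spaces are nontrivial.
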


\begin{remark} The examples of parabolic subalgebras considered in [7] y 
[8] motivate the following definition.
\end{remark}

\begin{definition} (a) A parabolic subalgebra  $\frak{p}$
is called  \textit{non-degenerate} if $\frak{p}_{-j}$ has finite codimension 
in $\frak{g}_{-j}$  for all \, $j>0.$

(b) A non-zero element  $a\in \frak{g}_{-\frac {1}{2}}$ is called
\textit{non-degenerate} if $\frak{p}^{a}$ is non-degenerate.
\end{definition}

In order to study quasifinite representations of graded Lie
superalgebras we recall some definitions and notions.

A $\frak{g}${-module} $ {V}$ is called $\frac
{1}{2}\mathbb{Z}$\textit{-graded} if
$${V}=\bigoplus_{j\in \frac {1}{2} \mathbb{Z}}V_j \quad\hbox{ and 
}\quad{\frak g_iV_j\subseteq{V_{i+j}}} \quad (i,j \in \frac {1}{2}
\Z),
$$
and $V$ is called \textit{quasifinite} if \hbox{ dim
}${V_j<{\infty}}$ \, {for all} $j$.

  Given $\lambda\in{{\frak{g}}_0}^{\ast}$, a  \textit{highest weight module} 
is a $\frac{1}{2}\Z-$graded $\frak g$-module
$V(\frak g,\lambda)=\bigoplus_{j\in \frac{1}{2}\Z}V_{j}$ defined
by the following properties:
\begin{align}
V_0&=\C v_{\lambda} \quad\hbox{where} \, v_{\lambda} \,\hbox{is a non zero 
vector,} \nonumber\\
hv_\lambda&=\lambda(h)v_\lambda \quad \hbox{for}\quad \, h\in
\frak g_0, \nonumber\\
\,\frak g_+v_\lambda&=0,\nonumber\\
\,{\cal U}(\frak g_{-})v_{\lambda}&=V(\frak g,\lambda).
\end{align}
Here and further ${\cal U}(\frak s)$ stands for the universal
enveloping superalgebra of the Lie superalgebra $\frak s$.

A non-zero vector  $v\in V(\frak g,\lambda)$ is called
\textit{singular} if $\frak g_+v=0.$

The \textit{ Verma module} is constructed as follows
$$M(\frak g,\lambda)={\cal{U}}(\frak 
g)\bigotimes_{{\cal{U}}(\frak{g}_{0}\bigoplus{\frak{g}_+})}{\C_\lambda},$$
where $\C_\lambda:= \mathbb{C}c_\lambda $, is the $1$-dimensional
$\frak{g}_0 \oplus\frak{g}_+$-module given by $h c_\lambda=
\lambda(h)c_\lambda$ if $h\in \frak g_0$ and $\frak g_{+}
c_\lambda= 0,$
  and the action of $\frak g$ on $M(\frak g,\lambda)$ is induced by the left 
multiplication
in ${\cal U}(\frak g)$.

Any highest module  $V(\frak g,\lambda)$ is a quotient of $M(\frak
g,\lambda).$ The "smallest" among the  $V(\frak g,\lambda)$ is the
unique irreducible module  $L(\frak g,\lambda)$ (which is the
quotient of $M(\frak g,\lambda)$ by its maximal graded submodule).

\noi For simplicity we denote  ${V(\frak{g},\lambda)}=
{V(\lambda)},$ \,$ {M(\frak{g},\lambda)}={M(\lambda)}$ and
${L(\frak{g},\lambda)}= {L(\lambda)}$.

Now , let $\frak{p}=\bigoplus_{j\in \frac {1}{2}\mathbb{Z}}{\frak
p_j},$ be a parabolic subalgebra of $\frak g$ and let $\lambda\in
\frak{g}_0^{\ast}$  be such that
  $\lambda|_{{{\frak{g}_0}\bigcap{[\frak p,\frak p]}}=0}$.
Then the $\frak g_0 \oplus \frak g_+$-module
$\mathbb{C}_\lambda=\mathbb{C}c_\lambda$ extends to $\frak p$ by
letting $\frak p_j\cdot c_\lambda= 0$ for $j< 0,$ and we may
construct the highest weight module
$$M(\frak
g,\lambda,\frak{p})={\cal{U}(\frak{g})}\bigotimes_{\cal{U}(\frak{p})}{\C_\lambda},$$
which is called the \textit{generalized Verma module}.

\pagebreak

We  also require the following condition on $\frak g$:

$(SP_3)$ If $\frak p$ is a non-degenerate parabolic subalgebra of
$\frak g,$ then there exists a non-degenerate element  $a$ such
that
$\frak p^a\subseteq \frak p$.
\begin{remark} The examples considered in [7] and [8] satisfy the 
properties
$(SP_1),\,(SP_2)$ y $\,(SP_3)$.\end{remark}

The main result of this section is the following Theorem, whose
proof we shall omit since it is completely analogous to the one in
[1].

\begin{theorem}\label{teo:quasi} Let $\frak g$ be a Lie  superalgebra that 
satisfies $(SP_1),\,(SP_2)$ and $(SP_3)$.
The following conditions on $\lambda\in g_0^{*}$  are equivalent:

\,

(1)\,$M(\lambda)$   contains a singular vector $av_\lambda\in
M(\lambda)_{-\frac {1}{2}}$ where  $a$ is non-degenerate;

(2)\,There exist a non-degenerate element $a\in \frak{g}_{-\frac
{1}{2}}$  such that $\lambda([\frak{g}_{\frac {1}{2}},a])=0.$

(3)\,$L(\lambda)$ is  quasifinite.

(4)  There exist a  non-degenerate element $a\in \frak{g}_{-\frac
{1}{2}}$  such that $L(\lambda)$  is the irreducible quotient of
the generalized Verma module $M(\frak{g},\lambda,\frak{p}^{a}).$
\end{theorem}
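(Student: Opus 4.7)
The plan is to prove the chain $(1) \Leftrightarrow (2)$, $(1) \Rightarrow (4) \Rightarrow (3) \Rightarrow (1)$, following the strategy of [1] but adapted to the $\frac{1}{2}\mathbb{Z}$-graded super setting. The tools already available are the inductive description of $\mathfrak{p}^a$ in (2.1), Lemma 2.3, the PBW theorem for Lie superalgebras, and, crucially, the structural axioms $(SP_1)$--$(SP_3)$.

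First I would dispatch $(1) \Leftrightarrow (2)$ by direct calculation. For $a \in \mathfrak{g}_{-1/2}$, the vector $av_\lambda \in M(\lambda)_{-1/2}$ is singular precisely when $\mathfrak{g}_{+} \cdot av_\lambda = 0$; since $M(\lambda)_{j}=0$ for $j>0$, the only nontrivial contribution comes from $\mathfrak{g}_{1/2}$, and using $hv_\lambda = 0$ with the super-bracket identity gives $h(av_\lambda) = [h,a]v_\lambda = \lambda([h,a])v_\lambda$ (using $(SP_1)$ to see $[h,a] \in \mathfrak{g}_0$ acts as a scalar). Hence singularity of $av_\lambda$ is equivalent to $\lambda([\mathfrak{g}_{1/2}, a]) = 0$.

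Next, for $(1) \Rightarrow (4)$, a singular vector $av_\lambda$ generates a proper submodule of $M(\lambda)$, so it maps to $0$ in $L(\lambda)$. An induction on $k$ using $\mathfrak{p}^a_{-k-1/2} = [\mathfrak{p}^a_{-1/2}, \mathfrak{p}^a_{-k}]$ and the fact that $\mathfrak{g}_0$ acts by scalars on $v_\lambda$ then shows $\mathfrak{p}^a_- v_\lambda = 0$ in $L(\lambda)$; hence $L(\lambda)$ is a quotient of $M(\mathfrak{g}, \lambda, \mathfrak{p}^a)$ and, being irreducible, is its irreducible quotient. For $(4) \Rightarrow (3)$, PBW gives $M(\mathfrak{g}, \lambda, \mathfrak{p}^a) \cong \mathcal{U}(\mathfrak{g}_-/\mathfrak{p}^a_-)$ as $\frac{1}{2}\mathbb{Z}$-graded vector spaces; non-degeneracy of $\mathfrak{p}^a$ (granted since $a$ is non-degenerate) makes each $\mathfrak{g}_{-j}/\mathfrak{p}^a_{-j}$ finite dimensional, so the graded pieces of the generalized Verma module, and therefore of its quotient $L(\lambda)$, are finite dimensional.

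The main work is the implication $(3) \Rightarrow (1)$, where $(SP_3)$ enters. The plan is to set $I_{-j} = \{x \in \mathfrak{g}_{-j} : xv_\lambda = 0 \text{ in } L(\lambda)\}$ and define $\mathfrak{p} = (\mathfrak{g}_0 \oplus \mathfrak{g}_+) \oplus \bigoplus_{j>0} I_{-j}$. Quasifiniteness of $L(\lambda)$ forces each $I_{-j}$ to have finite codimension in $\mathfrak{g}_{-j}$ via the surjection $\mathfrak{g}_{-j} \to L(\lambda)_{-j}$, $x \mapsto xv_\lambda$; after checking that $\mathfrak{p}$ is in fact a subalgebra, this exhibits $\mathfrak{p}$ as a non-degenerate parabolic subalgebra. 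Applying $(SP_3)$ yields a non-degenerate $a \in \mathfrak{g}_{-1/2}$ with $\mathfrak{p}^a \subseteq \mathfrak{p}$, so $av_\lambda = 0$ in $L(\lambda)$, meaning $av_\lambda$ lies in the maximal proper submodule of $M(\lambda)$. The final step is to observe that any nonzero $u \in M(\lambda)_{-1/2}$ lying in a proper submodule must itself be singular: since $M(\lambda)_0 = \mathbb{C}v_\lambda$, any $h \in \mathfrak{g}_{1/2}$ with $hu \neq 0$ would force the generated submodule to contain $v_\lambda$, contradicting properness.

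The principal obstacle I anticipate is verifying that the $\mathfrak{p}$ above is closed under the super-bracket, in particular that $[I,I]\subseteq I$ and $[\mathfrak{g}_+,I]\subseteq \mathfrak{p}$; the super-signs in the bracket identity require bookkeeping beyond the $\mathbb{Z}$-graded case of [1], and this step is essential because the conclusion of $(SP_3)$ is only meaningful once $\mathfrak{p}$ is known to be a genuine parabolic subalgebra.
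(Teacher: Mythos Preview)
Your proposal is correct and follows exactly the route the paper intends: the paper omits the proof entirely, declaring it ``completely analogous to the one in [1]'', and your chain $(1)\Leftrightarrow(2)$, $(1)\Rightarrow(4)\Rightarrow(3)\Rightarrow(1)$ together with the use of Lemma~\ref{lemm p2}, the PBW theorem, and axioms $(SP_1)$--$(SP_3)$ is precisely that argument transported to the $\tfrac{1}{2}\mathbb{Z}$-graded super setting. The obstacle you flag (closure of $\mathfrak{p}$ under the super-bracket) is genuine but routine: for $x\in I_{-j}$, $y\in I_{-k}$ one has $[x,y]v_\lambda = x(yv_\lambda)\mp y(xv_\lambda)=0$, and for $x\in\mathfrak{g}_{j}$ with $j\ge 0$, $y\in I_{-k}$ one uses $yv_\lambda=0$ together with $xv_\lambda\in\mathbb{C}v_\lambda$, so the super-signs never obstruct the computation.
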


\section{The Lie superalgebra of quantum pseudo differential operators}

Let $q\in \C^{\times}$ and $ |q|\neq 1 $. Now, $T_q$ denote the following operator on
$\C[z,z^{-1}]$:
$$
T_q(f(z))=f(qz)
$$
Let $\frak{S}_{q}^{as}$
denote the associative algebra of all  operators on
  $\C[z,z^{-1}]$ of the form
$$E=\sum_{k\in\Z}{e_{k}(z)T_{q}^k} \quad\hbox { where }  e_{k}(z)\in 
\C[z,z^{-1}]
\hbox{ and the sum is finite}.$$ We write such an operator as a
linear combination of operators of the form $z^{k}f(T_q)$, where
$f$ is a
Laurent polynomial in $T_q$ and $k\in \Z$.
The product in $\frak{S}_{q}^{as}$ is given by
$$
(z^{m}f(T_q))(z^{k}g(T_q))= z^{m+k}f(q^{k}T_q)g(T_q).
$$

Denote by $M(1|1)$ the set of $2\times2$ supermatrix
$$
\begin{pmatrix}
  f_{11} & f_{12} \\
  f_{21} & f_{22}
\end{pmatrix},$$
where $f_{ij}\in\C$, viewed as the associative superalgebra of
linear transformations of the complex $(1|1)$-dimensional
superspace $\C^{1|1}$.

We  denote $M_{ij}$  the $2\times 2$ matrix
with $1$ in the $ij$-place  and $0$ everywhere else. Declaring $M_{11}$, 
$M_{22}$  even and $M_{12}$, $M_{21}$ odd elements, we endow $M(1|1)$ with a 
$\Z_2$-gradation.

We denote by $S\frak S_q^{as}$  the associative superalgebra  of
$2\times 2$ supermatrices with entries in $\frak S_q^{as}$ namely $$
S\frak S_q^{as}=\frak S_q^{as}  \otimes M(1|1),
$$ and
the product is given by the usual matrix multiplication.  Let
$S\frak S_q$ denote the  corresponding Lie superalgebra   where
the  Lie superbracket is explicitly given by:
\begin{align}
[z^nf(T_q)M_{ij},\,z^mg(T_q)M_{rs}\}&=(z^nf(T_q)M_{ij})(z^mg(T_q)M_{rs})\nonumber\\
&\qquad-(-1)^{|M_{ij}||M_{rs}|}(z^mg(T_q)M_{rs})(z^nf(T_q)M_{ij})\nonumber\\
\,&=z^{n+m}\Big(f(q^mT_q)g(T_q)\delta_{jr}M_{is}\nonumber\\
&\qquad-(-1)^{|M_{ij}||M_{rs}|}g(q^nT_q)f(T_q)\delta_{si}M_{rj}\Big),
\end{align}
where $|M|$ denotes the parity of $M$. Now, introduce the linear
map \linebreak $Str_0:$  $S\frak S_q$ $\rightarrow$ $\C$ as

$$
Str_0 \begin{pmatrix}
  f_{11}(T_q) & f_{12}(T_q) \\
  f_{21}(T_q) & f_{22}(T_q)
\end{pmatrix}=\big({f_{11}(T_q)}\big)_0-{\big(f_{22}(T_q)\big)}_0,
$$
where $\big(f(T_q)\big)_0=f_0$ if $f(T_q)=\sum_k f_kT_q^k$,
$(f_k\in\mathbb{C})$. We should notice that  $Str_0$ has the
following property:
$$
Str_0(f(T_q)M_{ij}\,g(T_q)M_{kl})=(-1)^{|M_{ij}||M_{kl}|}Str_0(g(T_q)M_{kl}\,f(T_q)M_{ij}).
$$
Thus,  define a one-dimensional central extension $\widehat{S\frak
S_q}$ of $ S\frak S_q$ with the following super bracket:
$$[z^rF(T_q),z^sG(T_q)]=[(z^rF(T_q)),(z^sG(T_q))\}
+\psi_{\sigma,Str_0}(z^rF(T_q),z^sG(T_q))C,
$$
where $C$ is the {\textit{ central charge}, and  the  super
$2-cocycle$ $\psi_{\sigma,Str_0}$ is given by
\begin{align}
\psi_{\sigma,Str_0}(z^rf(T_q)M_{ij},z^{s}&g(T_q)M_{kl})=\nonumber\\
&Str_0\Big((1+\sigma+\cdots+\sigma^{r-1})\big(\sigma^{-r}(f(T_q)M_{ij})g(T_q)M_{kl}\big)\Big)
\nonumber\\
\,&=-(-1)^i\sum_{m=0}^{r-1}\Big(f(q^{-r+m}T_q)\,g(q^mT_q)\Big)_0\delta_{kj}\delta_{il}.
\end{align}
if  $r=-s>0$ and $0$ otherwise. Here $\sigma$ is the  automorphism of  $\frak S_q^{as}$ given by
$\sigma(f(T_q)M_{ij})=f(qT_q)M_{ij}$ (cf. with (1.3.1) in [2]).

The \textit{  principal} $\frac{1}{2}\Z$-{\textit{gradation}} in
$\widehat{S\frak S_q}$ is given by $\widehat{S\frak
S_q}=\bigoplus_{\alpha\in\Z/2}\widehat{(S\frak S_q)}_{\alpha}$, ($n\in\mathbb{Z}$),
\begin{equation}\label{grad1}
{ \widehat{(S\frak
S_q)}_{\alpha=n}=\{z^n\big(f_{11}(T_q)M_{11}+f_{22}(T_q)M_{22}\big)+\delta_{n,0}C
\,:\,{f_{ii}\in\C[w,w^{-1}]} \hbox{ with } \,i=1,2\}},
\end{equation}
and
\begin{align}\label{grad2}
\widehat{(S\frak S_q)}_{\alpha=n+1/2}=
\{z^nf_{12}(T_q)M_{12}+z^{n+1}f_{21}(T_q)M_{21}\,:\,
&{f_{ij}\in\C[w,w^{-1}]},\nonumber\\
&\quad{i,j\in \{1,2\}}\hbox{ with }{i\neq j} \}.
\end{align}

\section{Quasifinite Representations of $\widehat{S{\frak S}_q}$}

Let $V(\lambda)$ be a highest weight module over $\widehat{S\frak
S_q}$ with  highest weight $\lambda$. The highest weight vector
$v_\lambda \in V(\lambda)$  is characterized via the principal
gradation as $\widehat{(S\frak S_q)}_{\alpha}v_\lambda=0$ for
$\alpha\geq 1/2$ and $\widehat{(S\frak S_q)}_0 v_\lambda \in \C
v_\lambda$. Explicitly, these conditions are written as:
\begin{align*}
z^nf(T_q)M_{ij}v_\lambda=0 \, &\quad\hbox{with }  {n\geq 1},
\,{f(w)
\in \C[w,w^{-1}]}, \,{i=j} {\hbox{ or }} \,{i=1, \,j=2}; \nonumber\\
z^{n+1}f_{21}(T_q)M_{21}v_\lambda=0 \, &\quad{\hbox{with }} \,
{n\geq 0}, \,{f_{21}(w)
\in \C[w,w^{-1}]}; \nonumber\\
f_{12}(T_q)M_{12}v_\lambda=0 \, &\quad{\hbox{ with }} \,
\,{f_{12}(w)
\in \C[w,w^{-1}]};\nonumber\\
(T_q^s M_{ii})v_\lambda=\lambda(T_q^sM_{ii})v_\lambda \,
&\quad{\hbox{ with }} \, {s \in \Z}  ,\,{i=1,2}.
\end{align*}


Consider $\frak p=\bigoplus_{\alpha\in \Z/2} \frak p_\alpha${ a
parabolic subalgebra of $\widehat{S\frak S_q}$. Thus $\frak
p_\alpha=(\widehat {S\frak S_q})_\alpha$ for all $\alpha\geq 0$
and ${\frak p_{\alpha}\neq 0}$ for some $\alpha<0.$
Observe that for each $j \in \mathbb{N}$ we have
$$ \frak p_{-j}=\{z^{-j}(f_{11}(T_q)M_{11}+f_{22}(T_q)M_{22})+\delta_{j,0}C: 
f_{ii}(w)\in I_{-j}^{ii}\,{\hbox{with}} \,i=1,2\}$$
$$ \frak p_{-j+1/2}=\{z^{-j}f_{12}(T_q)M_{12}+z^{-j+1}f_{21}(T_q)M_{21} 
:f_{rs}(w)\in I_{-j}^{rs}
{\hbox{ with }} \,r,s\in\{1,2\}, \, r\neq s\},$$
where $I_{-j}^{rs} \hbox{ with }  r,s \in \{1,2\} $  are subspaces
of $\C[w,w^{-1}].$ Since $[\widehat{(S\frak S_q)}_0,\frak
p_{-\alpha}]\subseteq \frak p_{-\alpha}  \hbox{ with}\,
\alpha\in\frac{1}{2}\mathbb{N}$, it is easy to check that $I_{-k}^{rs}$
satisfies
$$
A_{k}^{rs}I_{-k}^{rs}\subseteq I_{-k}^{rs} \quad\hbox{with}\quad
r,s\in \{1,2\},$$ where $\,A_{k}^{rs}=\{f(q^{-k}w)-f(w) \,:\,
f(w)\in \C[w,w^{-1}]\}$ if $ r,s \in \{1,2\}$ and $ r=s$ or $ r=1$
and $ s=2$, and $A_{k}^{21}=\{f(q^{-k+1}w)-f(w) \,:\, f(w)\in
\C[w,w^{-1}] \}$.

\begin{lemma}\label{prop ideal}
(a)\,$I_{-k}^{rs}$ is an ideal $\hbox{ for all } {k\in\N}$ and
${r,s \in \{1,2\}}.$

(b) \,If $I_{-k}^{rs}\neq 0$ then it has finite codimension in
$\C[w,w^{-1}].$
\begin{proof}
Since $|q|\neq 1$, observe that $A_k^{rs}=\C[w,w^{-1}] \hbox{ for
all } {k\geq \frac{1}{2}},\,{r,s \in \{1,2\}}$. Then $I_{-k}^{rs}$
is an ideal. Let $b_{-k}^{rs}$ be the monic polynomials that
generate the corresponding ideals $I_{-k}^{rs}$, therefore
$\hbox{dim}
(\C[w,w^{-1}]/\Big((b_{-k}^{rs})=I_{-k}^{rs})\Big)<\infty.$
\end{proof}
\end{lemma}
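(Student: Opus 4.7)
The plan is to exploit the stability condition $A_k^{rs}I_{-k}^{rs}\subseteq I_{-k}^{rs}$ together with the hypothesis $|q|\neq 1$ to promote $I_{-k}^{rs}$ from a mere $A_k^{rs}$-stable subspace to a full $\C[w,w^{-1}]$-submodule, and then to invoke elementary facts about the Laurent polynomial ring to deduce (b).

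For (a), I would analyze $A_k^{rs}$ explicitly. Writing $c=q^{-k}$ in the cases $r=s$ or $(r,s)=(1,2)$, and $c=q^{-k+1}$ in the case $(r,s)=(2,1)$, the linear map $f(w)\mapsto f(cw)-f(w)$ on $\C[w,w^{-1}]$ sends the monomial $w^n$ to $(c^n-1)w^n$. The assumption $|q|\neq 1$ ensures that $q$ is not a root of unity, so $c^n\neq 1$ for every $n\neq 0$ whenever $c\neq 1$. Hence $A_k^{rs}$ contains $w^n$ for all $n\neq 0$. Combined with the stability hypothesis this gives $w^n\cdot f\in I_{-k}^{rs}$ for every $f\in I_{-k}^{rs}$ and $n\neq 0$; together with the trivial case $n=0$, taking linear combinations yields $\C[w,w^{-1}]\cdot I_{-k}^{rs}\subseteq I_{-k}^{rs}$, so $I_{-k}^{rs}$ is an ideal.

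For (b), I would use that $\C[w,w^{-1}]$ is a principal ideal domain, being the localization of the PID $\C[w]$ at the powers of $w$. Any nonzero ideal is therefore principal, generated by a Laurent polynomial which, after multiplication by a suitable power of the unit $w$, may be chosen as a monic polynomial $b(w)\in\C[w]$ with $b(0)\neq 0$. Since $w$ is then a unit modulo $b$, one obtains $\C[w,w^{-1}]/(b)\cong\C[w]/(b)$, of dimension $\deg b<\infty$.

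The main obstacle I foresee is the borderline case $k=1$, $(r,s)=(2,1)$, where $c=q^{-k+1}=1$ makes the shift-difference identically zero, so that $A_1^{21}=\{0\}$ and the stability above is vacuous. To handle this case I would return to the underlying parabolic condition $[\widehat{(S\frak S_q)}_0,\frak p_{-1/2}]\subseteq\frak p_{-1/2}$ and compute directly: the bracket $[f(T_q)M_{11},z^0 g(T_q)M_{21}]$ from the Lie superbracket (3.1) contains a summand proportional to $z^0 f(T_q)g(T_q)M_{21}$, forcing $f(w)g(w)\in I_{-1}^{21}$ for arbitrary $f\in\C[w,w^{-1}]$ and hence giving ideal-ness of $I_{-1}^{21}$ directly.
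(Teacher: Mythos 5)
Your argument is correct, and its main line is the same as the paper's: use $|q|\neq 1$ to see that $A_k^{rs}$ contains every monomial $w^n$ with $n\neq 0$ (up to nonzero scalars), so the stability $A_k^{rs}I_{-k}^{rs}\subseteq I_{-k}^{rs}$ already forces $I_{-k}^{rs}$ to be an ideal, and then use that every nonzero ideal of the principal ideal domain $\C[w,w^{-1}]$ is generated by a monic polynomial $b$ with $b(0)\neq 0$, whence the codimension is $\deg b<\infty$. Where you add something genuine is the case $k=1$, $(r,s)=(2,1)$: there $A_1^{21}=\{f(q^{0}w)-f(w)\}=0$, so the stability condition is vacuous, and the paper's blanket claim that $A_k^{rs}=\C[w,w^{-1}]$ is false in this case (it is also not literally true in the other cases, since the image of $f(w)\mapsto f(cw)-f(w)$ misses the constants --- harmless, as you observe, because $1\cdot I\subseteq I$ trivially). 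Your repair is sound: bracketing $f(T_q)M_{11}\in\frak p_0$ with $z^{0}g(T_q)M_{21}\in\frak p_{-1/2}$ gives, by the superbracket formula, exactly $-z^{0}f(T_q)g(T_q)M_{21}$ (no cocycle term arises, since both $z$-powers are $0$), so $fg\in I_{-1}^{21}$ for all $f$, which is the ideal property directly. In fact the same observation --- bracketing with $f(T_q)M_{11}$ alone instead of with the scalar elements $f(T_q)(M_{11}+M_{22})$ that produce the difference operators $A_k^{rs}$ --- yields the ideal property of the off-diagonal spaces $I_{-k}^{12}$, $I_{-k}^{21}$ for every $k$ without invoking $|q|\neq 1$ at all (the hypothesis $|q|\neq1$ is still needed for the diagonal spaces $I_{-k}^{ii}$). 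So your proposal follows the paper's strategy but is tighter, and it covers a borderline case that the paper's one-line proof does not.
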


\begin{proposition}\label{prop p1}
(a)\, any non-zero element of $\widehat{(S\frak S_q)}_{-1/2}$ is
non-degenerate.

(b) \, Any parabolic subalgebra of $\widehat{S\frak S_q}$ is
non-degenerate.

(c)\,Let $d=z^{-1}b_{12}(T_q)M_{12}+b_{21}(T_q)M_{21}\in
(\widehat{S \frak S_q)}_{-1/2}.$ Then:
\begin{align}
\,\widehat{(S \frak S_q)}_{0}^d:&=[\widehat{(S \frak
S_q)}_{1/2},d]\nonumber\\
\,&=\{f(T_q)b_{21}(T_q)I+g(q^{-1}T_q)b_{12}(T_q)M_{22}+b_{12}(qT_q)
g(T_q)M_{11}-\nonumber\\
\,&   \qquad(g(q^{-1}T_q)\,b_{12}(T_q))_0 \,C\,:\, f(w),\,g(w)\in
\C[w,w^{-1}]\}.
\end{align}
\end{proposition}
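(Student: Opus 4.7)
The plan is to handle the three parts together: (c) is a direct calculation using the bracket formula (3.1) and the cocycle (3.2); (a) follows from a surjectivity argument exploiting $|q|\neq 1$; and (b) is a formal consequence of (a) together with Remark~\ref{lemm p1} and Lemma~\ref{lemm p2}(a).

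For part (a), given a non-zero $d=z^{-1}b_{12}(T_q)M_{12}+b_{21}(T_q)M_{21}$, I would first compute $[f_{11}(T_q)M_{11}+f_{22}(T_q)M_{22},d]$ using (3.1). A short calculation shows that the $M_{12}$-coefficient is $z^{-1}(f_{11}(q^{-1}T_q)-f_{22}(T_q))b_{12}(T_q)$ while the $M_{21}$-coefficient is $(f_{22}(T_q)-f_{11}(T_q))b_{21}(T_q)$. Since $|q|\neq 1$, the map $g\mapsto g(q^{-1}T_q)-g(T_q)$ is surjective onto $\C[w,w^{-1}]$ (exactly the fact invoked in the proof of Lemma~\ref{prop ideal}), so $f_{11}$ and $f_{22}$ can be chosen independently to realize any pair of coefficients. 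Hence $\frak{p}^d_{-1/2}$ already contains $z^{-1}\C[T_q,T_q^{-1}]b_{12}(T_q)M_{12}+\C[T_q,T_q^{-1}]b_{21}(T_q)M_{21}$, which is of finite codimension in $\widehat{(S\frak S_q)}_{-1/2}$. I would then iterate the recursion $\frak{p}^d_{-k-1/2}=[\frak{p}^d_{-1/2},\frak{p}^d_{-k}]$ from (2.1), using analogous surjectivity at each step to transmit finite codimension down to every negative degree.

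For part (b), by Remark~\ref{lemm p1} applied repeatedly, any parabolic $\frak{p}$ satisfies $\frak{p}_{-1/2}\neq 0$; pick any non-zero $d\in\frak{p}_{-1/2}$, so by (a) it is non-degenerate, and by Lemma~\ref{lemm p2}(a), $\frak{p}^d\subseteq\frak{p}$. Then $\frak{p}_{-j}\supseteq\frak{p}^d_{-j}$ inherits finite codimension for all $j>0$, making $\frak{p}$ non-degenerate.

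For part (c), I would expand an arbitrary element of $\widehat{(S\frak S_q)}_{1/2}$ as $g_{12}(T_q)M_{12}+zg_{21}(T_q)M_{21}$ and compute its bracket with $d$ termwise. The $[M_{12},M_{12}]$- and $[M_{21},M_{21}]$-type brackets vanish (both $\delta_{jr}$ and $\delta_{si}$ being zero). The bracket $[g_{12}(T_q)M_{12},b_{21}(T_q)M_{21}]$ yields $g_{12}(T_q)b_{21}(T_q)(M_{11}+M_{22})=f(T_q)b_{21}(T_q)I$ with no cocycle contribution (as $r=0$). The bracket $[zg_{21}(T_q)M_{21},z^{-1}b_{12}(T_q)M_{12}]$ produces $g_{21}(q^{-1}T_q)b_{12}(T_q)M_{22}+b_{12}(qT_q)g_{21}(T_q)M_{11}$, and since $r=1=-s>0$ the cocycle~(3.2) contributes $-(g_{21}(q^{-1}T_q)b_{12}(T_q))_0\,C$. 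Setting $f:=g_{12}$ and $g:=g_{21}$ assembles these into the stated formula. The main obstacle is the inductive step in (a): propagating the ideal-like finite-codimension property through the interleaved $M_{11},M_{22},M_{12},M_{21}$ components at every level requires careful bracket bookkeeping in the spirit of Lemma~\ref{prop ideal}.
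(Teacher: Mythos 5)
Your parts (b) and (c) are correct and coincide with the paper's own argument: for (b) the paper likewise combines Remark \ref{lemm p1} (to get $\frak p_{-1/2}\neq 0$) with the minimality $\frak p^{d}\subseteq \frak p$ from Lemma \ref{lemm p2}(a), and for (c) it performs exactly your bracket-plus-cocycle computation with $a=f(T_q)M_{12}+z\,g(T_q)M_{21}$ and then quotes Lemma \ref{lemm p2}(b). For (a), however, you take a more computational route than the paper, which instead argues through Lemma \ref{prop ideal}: the components $I^{rs}_{-j}$ of the parabolic subalgebra $\frak p^{d}$ are ideals of $\C[w,w^{-1}]$, and nonzero ideals have finite codimension.

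In your version of (a) the decisive step --- that $z^{-1}\C[w,w^{-1}]b_{12}(T_q)M_{12}+\C[w,w^{-1}]b_{21}(T_q)M_{21}$ has finite codimension in $\widehat{(S\frak S_q)}_{-1/2}$ --- fails exactly when one of $b_{12},b_{21}$ is zero, which is not excluded by $d\neq 0$. In that case the $M_{21}$ (resp. $M_{12}$) part of $\frak p^{d}_{-1/2}$ is zero, since bracketing with the diagonal subalgebra $\widehat{(S\frak S_q)}_{0}$ never mixes $M_{12}$ and $M_{21}$; worse, brackets of two $M_{12}$-type (or two $M_{21}$-type) elements vanish identically, including their cocycle terms, so $\frak p^{d}_{-1}=[\frak p^{d}_{-1/2},\frak p^{d}_{-1/2}]=0$ and your downward iteration collapses. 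Thus your argument establishes (a) only for $d$ with $b_{12}\neq 0\neq b_{21}$. To be fair, the paper's one-line proof of (a) hides the same lacuna (it needs all the ideals $I^{rs}_{-j}$ attached to $\frak p^{d}$ to be nonzero, which fails in the same edge case; compare Lemma \ref{degenerate-{gl}}, where non-degeneracy genuinely requires more than $a\neq 0$), and in Theorem \ref{teo:QFM} only elements with both $b_{12}$ and $b_{21}$ nonzero monic are ever used. A further small inaccuracy: the map $g\mapsto g(q^{-1}w)-g(w)$ is not surjective onto $\C[w,w^{-1}]$ (its image misses the constants --- the proof of Lemma \ref{prop ideal} overstates this too), and with a single bracket the two coefficients are coupled through $f_{11},f_{22}$; both defects are harmless here because $\frak p^{d}_{-1/2}$ is spanned by iterated brackets and a codimension-one space of multipliers still produces finite codimension, but your write-up should acknowledge this rather than claim that any pair of coefficients is realized.
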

\begin{proof}
Let $0\neq d \in \widehat{(S \frak S_q)}_{-1/2}.$ Then, by Lemma
\ref {prop ideal} (b), part (a) follows. Let $\frak p$ be any
parabolic subalgebra of $\widehat{S \frak S_q},$ using Remark
\ref{lemm p1} we get $\frak p_{-1/2}\neq 0.$ Then, using (a) and
$\frak p^d\subseteq \frak p$ for any non-zero $ d\in \frak
p_{-1/2}$, we obtain (b). Let $
d=z^{-1}b_{12}(T_q)M_{12}+b_{21}(T_q)M_{21}\in \widehat{(S \frak
S_q)}_{-1/2},$ and $ a=f(T_q)M_{12}+z\,g(T_q)M_{21}\in \widehat{(S
\frak S_q)}_{1/2},$ with  $ b_{ij}(w)$, $f(w)$ and $ g(w)\in
\C[w,w^{-1}]$, where $i,j\in \{1,2\}$ with $i\neq j$. Then
\begin{align}
\,[a,d]&=[f(T_q)M_{12},b_{21}(T_q)M_{21}]+[zg(T_q)M_{21},z^{-1}b_{12}(T_q)M_{12}]\nonumber\\
\,&=f(T_q)b_{21}(T_q)M_{11}+b_{21}(T_q)f(T_q)M_{22}\nonumber\\
\,&\quad 
+g(q^{-1}T_q)b_{12}(T_q)M_{22}+b_{12}(qT_q)g(T_q)M_{11})-\big(g(q^{-1}T_q)\,b_{12}(T_q)\big)_0\,C\nonumber\\
\,&=f(T_q)b_{21}(T_q)I+g(q^{-1}T_q)b_{12}(T_q)M_{22}+b_{12}(qT_q)
g(T_q)M_{11}\nonumber\\
&\hskip 8cm-\big(g(q^{-1}T_q)\,b_{12}(T_q)\big)_0 \,C,
\end{align}
finally, part c) follows by  Lemma \ref{lemm p2} (b).
\end{proof}

A functional $\lambda \in (\widehat{S \frak S_q})_0^\star$ \,is
described by its  \textit{labels}
$$\Delta_{l,i}=-\lambda(T_q^lM_{ii}),  $$ $\hbox{ where }  i=1,2,
\hbox{ and }   l\in \Z,$ and the central charge $\, c=\lambda(C).
$ We shall consider the generating series
$$\Delta_{\lambda,i}(x)=\sum_{l \in \Z}x^{-l}\Delta_{l,i} \quad i= 1,2.$$

Recall that a \textit{quasipolynomial } is a linear combination of
functions of the form $p(x)e^{\alpha x} \hbox{ where } \, p(x)\in
\C[x]$ and $\alpha\in \C$. A formal power series is a
quasipolynomial if and only if it satisfies a non-trivial linear
differential equation with constant coefficients. We also have the
following well known result.

\begin{theorem}\label{teo:quasipoly} Given a   quasipolynomial $q(x)$ and a 
polynomial $B(x)=\prod_i(x-A_i)$,
let $b(x)=\prod_i(x-a_i) \hbox{ where } a_i=e^{A_i}$. Then
$\,b(x)(\sum_nq(n)x^{-n})=0$ if and only if $B(d/dx)q(x)=0$.
\end{theorem}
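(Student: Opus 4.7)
The plan is to convert the polynomial equation $b(x)F(x)=0$, where $F(x):=\sum_{n}q(n)x^{-n}$, into a linear recurrence on the coefficient sequence $(q(n))_n$, then compare that recurrence with the differential equation $B(\tfrac{d}{dx})q=0$ by analyzing the action of each side on the monomial quasipolynomials $p(x)e^{Ax}$.

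First I would note that $xF(x)=\sum_n q(n+1)x^{-n}$, so multiplication by $x$ on $F(x)$ is nothing but the forward shift $S:(c_n)\mapsto(c_{n+1})$ on the coefficients. Hence $b(x)F(x)=0$ is equivalent to $b(S)(q(n))_n=0$, i.e., $(q(n))_n$ is annihilated by the linear difference operator $b(S)$ whose characteristic polynomial is $b$.

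Next, for a monomial quasipolynomial $q(x)=p(x)e^{Ax}$ with $p\in\C[x]$ and $a=e^A$, a direct computation gives
$$
(S-a)\bigl(p(n)a^n\bigr)=a^{n+1}\Delta p(n),\qquad \bigl(\tfrac{d}{dx}-A\bigr)\bigl(p(x)e^{Ax}\bigr)=p'(x)e^{Ax},
$$
where $\Delta p(n)=p(n+1)-p(n)$ is the forward difference. Both $\Delta$ and $\tfrac{d}{dx}$ strictly lower $\deg p$ by one, while the shifted factors $(S-a')$ with $a'\neq a$ and $(\tfrac{d}{dx}-A')$ with $A'\neq A$ preserve the degree of $p$ and scale its leading coefficient by a nonzero constant. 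Iterating, $(S-a)^k$ (respectively $(\tfrac{d}{dx}-A)^k$) annihilates $p(n)a^n$ (respectively $p(x)e^{Ax}$) if and only if $\deg p<k$.

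Finally, I would decompose a general quasipolynomial as $q(x)=\sum_j p_j(x)e^{A_j x}$ with the distinct exponents $A_j$ taken from a fixed fundamental domain for $2\pi i\Z$, so that the values $a_j=e^{A_j}$ are also distinct. Invoking the linear independence of the families $\{x^m e^{Ax}\}_m$ and $\{n^m a^n\}_m$, the previous step shows that both $B(\tfrac{d}{dx})q=0$ and $b(x)F(x)=0$ reduce to the same numerical condition: for every $j$ the multiplicity of $A_j$ as a root of $B$ (equivalently, of $a_j$ as a root of $b$) strictly exceeds $\deg p_j$. The main technical point to watch is the non-injectivity of $A\mapsto e^A$: distinct roots of $B$ differing by integer multiples of $2\pi i$ a priori collapse into a single root of $b$ with accumulated multiplicity, and the decomposition of $q$ must be chosen coherently with this; working in a fundamental domain for $2\pi i\Z$ on both sides is the ingredient that aligns the root multiplicities and makes the equivalence go through cleanly.
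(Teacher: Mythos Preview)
The paper does not prove this theorem; it is simply introduced as a ``well known result'' and invoked later to pass between conditions (i) and (ii) of Theorem~\ref{teo:QFM}. So there is no proof in the paper to compare against, and your argument is supplying what the paper omits.

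Your proof is correct in its essentials. The identification of multiplication by $x$ on $F(x)$ with the shift $S$ on $(q(n))_n$ is the right move, and your analysis of how $(S-a)$ and $(\tfrac{d}{dx}-A)$ each lower $\deg p$ by one on the monomial pieces $p(x)e^{Ax}$ is clean and accurate. The reduction of both conditions to the same multiplicity inequality is exactly the mechanism behind the equivalence.

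You are also right to flag the $2\pi i\Z$ issue, and in fact you should state it as a genuine hypothesis rather than a technicality to be managed: as literally written, the theorem can fail when $B$ has two roots $A$ and $A+2\pi i k$ with $k\neq 0$. For example, with $B(x)=x(x-2\pi i)$ one gets $b(x)=(x-1)^2$; taking $q(x)=x$ gives $q(n)=n$, which is annihilated by $(S-1)^2$ so $b(x)F(x)=0$, yet $B(\tfrac{d}{dx})q=-2\pi i\neq 0$. Your ``fundamental domain'' remark is precisely the extra assumption that rules this out: one should require that $A\mapsto e^A$ is injective on the multiset of roots of $B$ (equivalently, choose the $A_i$ in a fixed strip of width $2\pi$). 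In the paper's applications this causes no trouble, since one is free to choose $B$ with that property once $b$ is known, and conversely to build $b$ from any $B$ annihilating $q$; but it is worth saying explicitly that the biconditional, as stated, needs this proviso.
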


Now we state the main result of this article.

\begin{theorem}\label{teo:QFM} An irreducible highest weight module 
$L(\widehat{S\frak
S_q},\lambda)$ is quasifinite if and only if one of the following
equivalent conditions hold:

(i)\,There exist two monic non-zero polynomials
$b_{12}(x),\,b_{21}(x)$ such that
\begin{align}
\,&b_{12}(x)(\Delta_{\lambda,1}(q^{-1}x)+\Delta_{\lambda,2}(x)-c)=0,\nonumber\\
\,&b_{21}(x)(\Delta_{\lambda,1}(x)+\Delta_{\lambda,2}(x))=0.
\end{align}

(ii)\,There exist quasipolynomials  $P_{12}(x)$ and $P_{21}(x)$ such
that $P_{21}(0)=P_{12}(0)+c$ and   ($n\in \Z$, $n\neq 0$):
\begin{align}
P_{21}(n)&=\Delta_{n,1}+\Delta_{n,2},\nonumber\\
P_{12}(n)&=\Delta_{n,1}q^n+\Delta_{n,2}.
\end{align}
\end{theorem}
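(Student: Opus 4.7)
The plan is to combine the abstract quasifiniteness criterion of Theorem~\ref{teo:quasi} with the explicit formula for $[\widehat{(S\frak S_q)}_{1/2}, d]$ in Proposition~\ref{prop p1}(c), and then to convert the resulting linear relations on the labels into quasipolynomial form via Theorem~\ref{teo:quasipoly}. First, I would verify that $\widehat{S\frak S_q}$ satisfies $(SP_1)$--$(SP_3)$: commutativity of $\widehat{(S\frak S_q)}_0$ is immediate since it consists of diagonal matrices whose entries are Laurent polynomials in the single operator $T_q$; $(SP_3)$ is Proposition~\ref{prop p1}(b); and $(SP_2)$ is a direct computation that uses $|q|\neq 1$. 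Combined with Proposition~\ref{prop p1}(a), Theorem~\ref{teo:quasi} then reduces quasifiniteness of $L(\lambda)$ to the existence of a non-zero $d = z^{-1}b_{12}(T_q)M_{12} + b_{21}(T_q)M_{21} \in \widehat{(S\frak S_q)}_{-1/2}$ such that $\lambda([\widehat{(S\frak S_q)}_{1/2}, d]) = 0$.

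Next, I would make this vanishing explicit using Proposition~\ref{prop p1}(c). For an arbitrary $a = f(T_q)M_{12} + z g(T_q)M_{21} \in \widehat{(S\frak S_q)}_{1/2}$, applying $\lambda$ to the explicit formula for $[a,d]$ splits into two independent conditions according as $f$ or $g$ varies. Reading off the coefficient of $x^{-k}$ for each $k\in\Z$, the $g=0$ piece becomes
\[
b_{21}(x)\bigl(\Delta_{\lambda,1}(x) + \Delta_{\lambda,2}(x)\bigr) = 0,
\]
while the $f=0$ piece, after tracking the shift $T_q \mapsto q^{-1}T_q$ built into the product of $\frak S_q^{as}$ and the cocycle contribution $-(g(q^{-1}T_q)b_{12}(T_q))_0 C$, becomes
\[
b_{12}(x)\bigl(\Delta_{\lambda,1}(q^{-1}x) + \Delta_{\lambda,2}(x) - c\bigr) = 0.
\]
This establishes (i), the reverse direction being symmetric: any pair of monic polynomials satisfying (i) produces such a $d$ via the same formulas. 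The equivalence (i)$\Leftrightarrow$(ii) is then a direct application of Theorem~\ref{teo:quasipoly} to each of the two series: a Laurent series is annihilated by a monic polynomial $b(x)$ iff its coefficients are the values of a quasipolynomial. This produces $P_{21}, P_{12}$ with $P_{21}(n) = \Delta_{n,1} + \Delta_{n,2}$ and $P_{12}(n) = q^n\Delta_{n,1} + \Delta_{n,2}$ for $n \neq 0$; the $x^0$ coefficient of the second series gives $P_{12}(0) = \Delta_{0,1} + \Delta_{0,2} - c$ while that of the first gives $P_{21}(0) = \Delta_{0,1} + \Delta_{0,2}$, yielding the required compatibility $P_{21}(0) = P_{12}(0) + c$.

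The main obstacle will be the bookkeeping in the second step: carrying the noncommutative $q$-shift of $\frak S_q^{as}$ through to the generating series (so that only the equation from the $f=0$ piece acquires a $q^{-1}x$ substitution), tracking signs in the cocycle so that the $-c$ term appears on the correct side, and verifying that a single pair $(b_{12}, b_{21})$ annihilates the full infinite families of relations parametrized by $f, g \in \C[w, w^{-1}]$ rather than merely a restricted subfamily.
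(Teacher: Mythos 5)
Your proposal follows essentially the same route as the paper: reduce via Theorem~\ref{teo:quasi} to the vanishing of $\lambda$ on $[\widehat{(S\frak S_q)}_{1/2},d]$, use the explicit bracket of Proposition~\ref{prop p1}(c), resum the resulting label relations into the generating series identities of (i), and pass to (ii) by Theorem~\ref{teo:quasipoly}; your preliminary check of $(SP_1)$--$(SP_3)$ and the constant-term derivation of $P_{21}(0)=P_{12}(0)+c$ only make explicit what the paper leaves implicit. The argument is correct as planned.
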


\begin{proof} From Theorem  \ref{teo:quasi} (2),  we have that 
$L(\widehat{S\frak
S_q},\lambda)$ is quasifinite if and only if exist    $d\in (S
\frak S_q)_{-1/2}$ non-degenerate such that $\lambda([\widehat{(S
\frak S_q)}_{1/2},d])=0.$ But  by   la Proposition \ref{prop
p1}(c) this is equivalent to
\begin{align}
\quad 0=&\lambda
\big(f(T_q)b_{21}(T_q)I\big) \qquad\hbox{and}\,\nonumber\\
\quad0=&\lambda
\big(g(q^{-1}T_q)b_{12}(T_q)E_{22}+b_{12}(qT_q)g(T_q)E_{11}\big)-
\big(g(q^{-1}T_q)\,b_{12}(T_q)\big)_0\,c,
\end{align}
for all $f(w)$ y $g(w) \in \C[w,w^{-1}]$. In particular for
$f(w)=w^s$ and $g(w)=(qw)^r$, with $r,\,s\in\Z$, we have
\begin{align}
\,0=&\lambda
\left(T_q^sb_{21}(T_q)I\right), \nonumber\\
\,0=&\lambda
\left(T_q^rb_{12}(T_q)E_{22}+b_{12}(qT_q)(qT_q)^rE_{11}\right)-((T_q)^r\,b_{12}(T_q))_0\,c.
\end{align}
Writing $b_{12}(w)=\sum_i \beta_j^{12}w^j$ and $b_{21}(w)=\sum_i
\gamma_i^{21}w^i$ with $\beta_j^{12}$ and $\gamma_i^{21}
\in\mathbb{C}$,
\begin{align}\label{teoqf1}
\,0&=-\sum_i\gamma_i^{21}\lambda(T_q^{s+i}I)\nonumber\\
\,&=\sum_i\gamma_i^{21}(\Delta_{s+i,1}+\Delta_{s+i,2}).
\end{align}
and
\begin{align}\label{teoqf2}
\,0&=-\sum_j\beta_j^{12}\lambda(T_q^{r+j}E_{22}+(qT_q)^{r+j}E_{11})-\delta_{r,-j}\beta_j^{12}c\nonumber\\
\,&=\sum_j\beta_j^{12}(q^{r+j}\Delta_{r+j,1}+\Delta_{r+j,2})
-\delta_{r,-j}\beta_j^{12}c
\end{align}
Multiplying (\ref{teoqf1}) by $x^{-s}$ and adding over $ s\in\Z$,

\begin{align}
\,0&=\sum_{i,s}\gamma_i^{21}(\Delta_{s+i,1}x^{-s-i}+\Delta_{s+i,2}x^{-s-i})
x^i\nonumber\\
\,&=\sum_{i}\gamma_i^{21}\sum_s(\Delta_{s+i,1}x^{-s-i}+\Delta_{s+i,2}
x^{-s-i})x^i\nonumber\\
\,&=b_{21}(x)(\Delta_{\lambda,1}(x)+\Delta_{\lambda,2}(x)).
\end{align}
Similarly, multiplying (\ref{teoqf2}) by $x^{-r}$ and adding over
$r\in\Z$,
\begin{align}
\,0&=\sum_{j,r}\big[\beta_j^{12}(q^{r+j}\Delta_{r+j,1}x^{-r-j}+\Delta_{r+j,2}
x^{-r-j})x^j-\delta_{r,-j}\beta_j^{12}x^{-r}c\big]\nonumber\\
\,&=\sum_{j}\beta_j^{12}\big[\sum_r(q^{r+j}\Delta_{r+j,1}x^{-r-j}+\Delta_{r+j,2}
x^{-r-j})-c\big]x^j\nonumber\\
\,&=b_{12}(x)(\Delta_{\lambda,1}(q^{-1}x)+\Delta_{\lambda,2}(x)-c).
\end{align}
Thus we proved the first part. The equivalence between (i) and
(ii)follows from Theorem \ref{teo:quasipoly}.\end{proof}

\subsection {Interplay between $\widehat{S\frak S_q}$ and $\frak
{gl}_{\infty|\infty}{[m]}$ }

Given a non-negative integer $m$, consider the algebra of
truncated polynomials   $R=R_m=\C[t]/(t^{m+1})$, and let
$M_\infty[m] $ be the associative algebra consisting of matrices
$(a_{ij})_{i,j \in \Z}$ with $a_{ij} \in R_m$ such that $a_{ij}=0
$ for $|i-j|>>0$.
We  denote by  $\frak {gl}_{\infty}[m]$ the Lie algebra obtained from 
$M_\infty[m]$ by taking the usual commutator.

Define the associative superalgebra
$M_{\infty|\infty}[m]=M_{\infty}[m]\otimes M(1|1) $ with the
induced $\Z_2-$graded structure from $M(1\1).$

Denote $\frak {gl}_{\infty|\infty}[m]$ the Lie superalgebra
obtained from $M_{\infty|\infty}[m]$ by taking the usual super
commutator.
One may have two different ways of looking at  $\frak
{gl}_{\infty|\infty}[m].$ First we may regard  $\frak
{gl}_{\infty|\infty}[m]=\bigoplus_{i,j=1,2}\frak
{gl}_{\infty}[m]M_{ij},$ that is,
$$\frak
{gl}_{\infty|\infty}[m]=\left[\begin{array}{cc}
  \frak {gl}_\infty[m] & \frak {gl}_\infty[m] \\
  \frak {gl}_\infty[m] & \frak {gl}_\infty[m]
\end{array}\right].
$$

One also may identify
\begin{equation}\label{gl-infty}
\frak {gl}_{\infty|\infty}[m]=\{(a_{ij})_{i,j \in\Z/2} \,: \,
a_{ij} \in R_m \hbox{ and } a_{ij}=0 \hbox{ for } |i-j|>>0 \}.
\end{equation}
Under this identification, the   $\Z_2-$graded structure is given
by
$$ |E_{ij}|=\begin{cases}
\bar{0}, &\quad{\hbox{ if}} \quad {i-j \in\Z}\\
\,\bar{1} &\quad\hbox{ if} \, \quad{i-j\in\Z+1/2,}
\end{cases}
$$
where $E_{ij}$ denotes, as always, the infinite matrix with one in
the $ij$ entry and $0$ elsewhere.

The identification between  the two presentations of $\frak
{gl}_{\infty|\infty}[m]$ is given by ($i,\,j\in\mathbb{Z}$)
\begin{align}\label{identifglinfty}
\,E_{ij}M_{11}&=E_{i,j},\nonumber\\ \,
\,E_{ij}M_{22}&=E_{i-1/2,j-1/2},\nonumber\\
\,E_{ij}M_{12}&=E_{i,j-1/2},\nonumber\\
\,E_{ij}M_{21}&=E_{i-1/2,j}.
\end{align}

Under this identification, the Lie superalgebra $\frak
{gl}_{\infty|\infty}[m]$ is equipped with a natural
$\frac{1}{2}\Z-$gradation   $$\frak
{gl}_{\infty|\infty}[m]=\bigoplus_{r\in \frac{1}{2}
\Z}(\frak{gl}_{\infty|\infty}[m])_r$$  where
$(\frak{gl}_{\infty|\infty}[m])_r$ is the completion of the linear
span of $E_{ij}$ with $j-i=r$. This is the \textit{ principal
gradation } of $\frak {gl}_{\infty|\infty}[m]$.

Choose a  branch of $\log\, q$, and let $\tau=\frac{\log \,q}{2\pi
i}$. Then any $s\in\C$ is uniquely written as $s=q^a$, $a\in
\C/\tau^{-1}\Z$.

Take $s=q^a\in \C$ and let $R^{\infty|\infty}=R^\infty\bigoplus
R^\infty \theta=t^aR[t,t^{-1}]\bigoplus \theta t^aR[t,t^{-1}]$
with  $\theta$ an odd indeterminate. Consider the following basis
in   $R^{\infty|\infty}$,
$$
\{v_i=t^{-i+a}, \, v_{i-\frac {1}{2}}=t^{-i+a}\theta, \, i\in \Z
\}.$$ The Lie superalgebra  $\frak {gl}_{\infty|\infty}[m]$ acts
on $R^{\infty|\infty}$ by letting  $E_{ij}v_k=\delta_{jk}v_i$ with
$i,j,k \in \frac{1}{2}\Z.$ The Lie superalgebra  $S\frak S_q$ acts
on  $R^{\infty|\infty}$ as quantum pseudo-differential operators.
In this way we obtain a family of embeddings $\varphi_s^{[m]}$ of
$S\frak S_q$  into $\frak {gl}_{\infty|\infty}$
given by
\begin{align}
\,\varphi_s^{[m]}(t^k f_{11}(T_q)M_{11})&=\sum_{j\in 
\Z}f_{11}(sq^{-j+t})E_{j-k,j}, \nonumber\\
\,\varphi_s^{[m]}(t^k f_{21}(T_q)M_{21})&=\sum_{j\in 
\Z}f_{21}(sq^{-j+t})E_{j-k-\frac {1}{2},j}, \nonumber\\
\,\varphi_s^{[m]}(t^k f_{12}(T_q)M_{12})&=\sum_{j\in 
\Z}f_{12}(sq^{-j+t})E_{j-k,j-\frac{1}{2}}, \nonumber\\
\,\varphi_s^{[m]}(t^k f_{22}(T_q)M_{22})&=\sum_{j\in
\Z}f_{22}(sq^{-j+t})E_{j-k-\frac{1}{2},j-\frac{1}{2}}.
\end{align}

Note that the principal gradation on $\frak
{gl}_{\infty|\infty}[m]$ is compatible with that on $S\frak S_q$
under the map $\varphi_s^{[m]}$ and observe that the embedding $\varphi_s^{[m]}$ restricted to the
$\frak S_q M_{11}$ coincides with (6.2.1) in [2].

Denote by $\cal O$  the algebra of all holomorphic functions on
$\C^{\times}$ with topology of uniform convergence on compact
sets. We define a completion  $S\frak S_q^{a\cal O}$ of the
associative superalgebra of quantum pseudo-differential operators
by considering quantum pseudo-differential operators of infinite
order of the form $z^kf(T_q)M_{ij}$, where $f \in \cal O.$ The
embedding $\varphi_{s}^{[m]}$ extends naturally to $S\frak
S_q^{a\cal O}$.

Define
$$
\,I_s^{[m]}=\{ f \in {\cal O} \,:\, \,f^{(i)}(sq^n)=0 \,\hbox{ for
all } \, n\in\Z, \, i=0,\cdots ,m\}$$ and
$$\,J_s^{[m]}=\bigoplus_{i,j=1}^2\bigoplus_{k\in \Z} z^k
I_s^{[m]}M_{ij}.$$ Therefore, it follows by the Taylor formula for
$\varphi_s^{[m]}$ that
$$
\ker {\varphi}_s^{[m]}= J_s^{[m]}.
$$

Now,   fix  $\vec{s}=(s_1, \dotsb , s_n)\in\C^n $ such that if
we
write each $s_i=q^{a_i}$, we have
\begin{equation}\label{discreteseq}
a_i-a_j\notin \Z+\tau^{-1}\Z\hbox{          for  } i\neq j,
\end{equation}
and fix $\vec{m}=(m_1,\dotsb , m_n)\in \Z_+^n $.

Let $M_{\infty|\infty}[\vec{\hbox{m}}]= \oplus_{i=1}^n  M
_{\infty|\infty}[m_i]$. Consider the homomorphism
$$
\varphi_{\vec{s}}^{[\vec{m}]}=\bigoplus_{i=1}^n
\varphi_{s_i}^{[m_i]}:{S{\frak S}_q^{\cal O}}^{as}\longrightarrow  M
_{\infty|\infty}[\vec{m}].
$$
It is well known that for every discrete sequence of points in
$\C$ and a non negative integer $m$ there exists $f(w)\in\cal O$
having prescribed values of its first $m$ derivatives. Thus, due
to this fact and condition (\ref{discreteseq}) the following
Proposition follows.

\begin{proposition}\label{shortexactsequence}We have the exact sequence of 
$\frac{1}{2}\Bbb Z$-graded
associative superalgebras, provided that  $|q|\neq 1$:
$$
0\longrightarrow J_{\vec{s}}^{[\vec{m}]}\;\longrightarrow\;
{{S\frak
S}_q^{\cal O}}^{as}\;\overset{\,\,\,\varphi_{\vec{s}}^{[\vec{m}]}\,\,}
\longrightarrow \; M _{\infty|\infty}[\vec{m}] \longrightarrow  0
$$
where $J_{\vec{s}}^{[\vec{m}]}= \bigcap_{i=1}^n J_{s_i}^{[m_i]}$.

\end{proposition}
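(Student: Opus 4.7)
The plan is to verify exactness at the three places of the sequence; since the inclusion $J_{\vec s}^{[\vec m]}\hookrightarrow{{S\frak S}_q^{\cal O}}^{as}$ is tautological, the content is to identify $\ker\varphi_{\vec s}^{[\vec m]}$ with $J_{\vec s}^{[\vec m]}$ and to prove surjectivity of $\varphi_{\vec s}^{[\vec m]}$. I would split the work into (a) a formal kernel computation that reduces to the single-point case already recorded just before the statement, and (b) a surjectivity argument that reduces, grade by grade and diagonal by diagonal, to a classical interpolation problem in $\mathcal{O}$.

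For the kernel, since $\varphi_{\vec s}^{[\vec m]}=\bigoplus_{i=1}^n\varphi_{s_i}^{[m_i]}$, an operator lies in the kernel iff it lies in every $\ker\varphi_{s_i}^{[m_i]}$. The text has already identified $\ker\varphi_s^{[m]}$ with $J_s^{[m]}$ via Taylor's formula: each of the four formulas defining $\varphi_s^{[m]}$ sends $z^kf(T_q)M_{rs}$ to a matrix whose non-zero entries are the $R_m$-images of the values $f(sq^{-j+t})$ for $j\in\Z$, and these vanish iff the $m$-jet of $f$ vanishes at every $sq^{-j}$, which is the defining condition of $I_s^{[m]}$. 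Intersecting over $i$ yields the asserted description $J_{\vec s}^{[\vec m]}=\bigcap_i J_{s_i}^{[m_i]}$.

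For surjectivity, the principal $\frac{1}{2}\Z$-gradation is preserved by $\varphi_{\vec s}^{[\vec m]}$, so it suffices to surject grade by grade; and since any matrix in $M_{\infty|\infty}[\vec m]$ is supported on finitely many diagonals, it suffices to realise a single graded piece, and further to realise separately each of the four sub-pieces corresponding to $M_{11},M_{22},M_{12},M_{21}$. Such a homogeneous element in the $i$-th summand is specified by a sequence of entries along one diagonal, each entry a polynomial in $R_{m_i}$. I would then seek a preimage of the form $z^kf(T_q)M_{rs}$ with $f\in\mathcal{O}$; the equation to solve is that the $m_i$-jet of $f$ at each $s_iq^n$ ($n\in\Z$) match the prescribed element of $R_{m_i}$. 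By condition (\ref{discreteseq}) the points $\{s_iq^n:1\le i\le n,\,n\in\Z\}$ form a discrete, pairwise-distinct subset of $\C^\times$, and hence the interpolation result for $\mathcal{O}$ quoted just before the statement produces a single $f\in\mathcal{O}$ realising all the prescribed jets simultaneously.

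The main obstacle is bookkeeping: one must track the shifts $j$, $j-\frac{1}{2}$, etc., that appear in the indexing of $E_{ab}$ across the four formulas for $\varphi_s^{[m]}$ and verify, using the identification (\ref{identifglinfty}), that each diagonal of a matrix in $M_{\infty|\infty}[\vec m]$ is hit by exactly one of the four sub-pieces, so the interpolation may be carried out independently. It is precisely hypothesis (\ref{discreteseq})---namely $a_i-a_j\notin\Z+\tau^{-1}\Z$ for $i\ne j$---that guarantees the interpolation points for different indices $i$ remain disjoint, so the $n$ individual interpolation problems combine into a single well-posed one. Once this is verified, both the kernel computation and the surjectivity of $\varphi_{\vec s}^{[\vec m]}$ are in hand, and exactness of the sequence follows.
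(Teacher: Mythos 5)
Your argument is correct and follows the same route the paper takes: the kernel identification via the Taylor formula (already recorded for a single $s$, then intersected over the summands), and surjectivity via the classical interpolation theorem for holomorphic functions at the discrete set $\{s_iq^n\}$, with condition (\ref{discreteseq}) guaranteeing these points are pairwise distinct across different $i$ and $|q|\neq 1$ guaranteeing discreteness. The paper simply asserts the Proposition from these two facts, so your write-up is if anything more detailed than the original.
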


Consider the following \textit{super 2-cocycle} on  $\frak{
{gl}}_{\infty|\infty}[m]$ with values in $R_m:$
$$C(A,B)=Str([J,A],B), \quad A, B \in \frak {gl}
_{\infty|\infty}[m],$$  where $J=\sum_{r\leq 0}E_{r,r}$, and for a
matrix
$A=(a_{ij})_{i,j\in\frac{1}{2}\Z} \in \frak {gl}_{\infty|\infty}[m]$, 
$Str(A):=\sum_{r\in
\frac{1}{2}\Z}(-1)^{2r}a_{rr}$. Note that $C(A,B)$ is well defined for all 
$A$ and $B$ in $\frak {gl}_{\infty|\infty}[m]$. 
Denote by
$$\widehat{\frak {gl}}_{\infty |\infty}[m]=\{(a_{ij})_{i,j \in\Z/2} \,: \,
a_{ij} \in R_m \hbox{ y }  a_{ij}=0  \hbox{ para }
  |i-j|>>0 \}\bigoplus R_m,$$  the corresponding central extension. The
$\frac{1}{2}\mathbb{Z}$-gradation of this Lie superalgebra extends from 
$\frak {gl}_{\infty|\infty}[m]$ by letting  $gr(R_m )=0$. 

Therefore we have the following

\begin{lemma} The $\C$-linear map  
$\widehat{\varphi}_s^{[m]}:\widehat{S\frak S_q}\rightarrow \widehat{\frak 
{g}l}_{\infty|\infty}[m]$
defined by
\begin{align*}
\widehat{\varphi}_s^{[m]}(z^rT_q^kM_{ij})&=\varphi_s^{[m]}(z^rT_q^kM_{ij})
\quad\hbox{with} \quad r\neq 0,\nonumber\\
\, \widehat{\varphi}_s^{[m]}(C)&=1 \in R_m,\nonumber\\
\widehat{\varphi}_s^{[m]}(T_q^kM_{ii})&=\varphi_s^{[m]}(T_q^kM_{ii})-(-1)^i\frac{q^{ak}}{1-q^k}
\sum_{j=0}^m(k\log
q)\frac {t^j}{j!} \quad\hbox{if} \quad k\neq 0,\nonumber\\
\widehat{\varphi}_s^{[m]}(T_q^kM_{ij})&=\varphi_s^{[m]}(T_q^kM_{ij}) 
\quad\hbox{ for all}\quad k, \hbox{ and }i\neq j,\nonumber\\
\widehat{\varphi}_s^{[m]}(M_{ii})&=\varphi_s^{[m]}(M_{ii})
\end{align*}
is a homomorphism of Lie superalgebras. \end{lemma}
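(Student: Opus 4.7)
The map $\varphi_s^{[m]}\colon S\frak S_q\to\frak{gl}_{\infty|\infty}[m]$ is already a Lie superalgebra homomorphism, coming from the action on $R^{\infty|\infty}$ and compatible with the principal $\frac12\Z$-gradation established above. So the only new content is the compatibility of the two central extensions: the $\C$-valued cocycle $\psi_{\sigma,Str_0}$ on $\widehat{S\frak S_q}$ must match, after the prescribed $R_m$-valued correction on the diagonal generators $T_q^kM_{ii}$, the matrix cocycle $C(A,B)=Str([J,A],B)$ on $\widehat{\frak{gl}}_{\infty|\infty}[m]$. Writing both brackets as non-central part plus cocycle, the identity to verify on monomial generators $x=z^rf(T_q)M_{ij}$ and $y=z^sg(T_q)M_{kl}$ becomes
\[
\bigl(\widehat{\varphi}_s^{[m]}-\varphi_s^{[m]}\bigr)\bigl(\{x,y\}\bigr)+\psi_{\sigma,Str_0}(x,y)\cdot 1_{R_m}=C\bigl(\varphi_s^{[m]}(x),\varphi_s^{[m]}(y)\bigr),
\]
where $\{\cdot,\cdot\}$ denotes the bracket in $S\frak S_q$.

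I would split the verification into cases. If $r+s\neq 0$, both cocycles vanish and $\{x,y\}$ carries no diagonal $T_q^kM_{ii}$ piece, so the identity follows directly from $\varphi_s^{[m]}$ being a Lie map. If $r+s=0$ with $\{x,y\}$ purely off-diagonal in $M_{12}\oplus M_{21}$, no correction is triggered and only the scalar cocycle identity remains; this is a direct supersymmetric analogue of the even computation in Section~6 of [2]. The essential new situation is $r+s=0$ with a diagonal $T_q^kM_{ii}$ component in $\{x,y\}$. Here, for $x=z^rT_q^pM_{ij}$, $y=z^{-r}T_q^qM_{ji}$ with $r>0$, one expands $\varphi_s^{[m]}(T_q^nM_{ii})$ as a diagonal matrix with entries $q^{n(a-\nu+t)}$ (with the appropriate half-integer shift according to $i$) by the truncated Taylor series $q^{nt}\equiv\sum_{\mu=0}^m(n\log q)^\mu t^\mu/\mu!\pmod{t^{m+1}}$, substitutes into $Str([J,\cdot],\cdot)$ with $J=\sum_{\nu\le 0}E_{\nu,\nu}$, and regularises the formally divergent geometric sum $\sum_{\nu\le 0}q^{-\nu k}$ as $q^{ak}/(1-q^k)$. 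The counterterm in the definition of $\widehat{\varphi}_s^{[m]}$ compensates exactly this, with the sign $(-1)^i$ reflecting the opposite contribution of $M_{11}$ and $M_{22}$ to $Str$; combining with the finite Kac--Radul sum coming from $\psi_{\sigma,Str_0}$ then closes the identity.

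The main obstacle is keeping three kinds of bookkeeping simultaneously consistent: the supersigns $(-1)^{|M_{ij}||M_{kl}|}$ in the bracket on $S\frak S_q$, the parity-weighted trace $Str$ on $\frak{gl}_{\infty|\infty}[m]$, and the $R_m$-truncated Taylor expansion in $t$ around $s=q^a$. Once these are aligned, the computation reduces to an algebraic identity of the same flavour as (6.2.2) in [2], with no substantively new steps beyond careful bookkeeping.
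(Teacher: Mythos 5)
The paper gives no proof of this lemma at all --- it is stated as a consequence of the preceding constructions, the purely even case being the computation around (6.2.1) in [2] --- so you can only be measured against that standard verification, and your skeleton agrees with it: since all correction terms are central, the claim reduces to the identity $(\widehat{\varphi}_s^{[m]}-\varphi_s^{[m]})([x,y\})+\psi_{\sigma,Str_0}(x,y)\cdot 1=C(\varphi_s^{[m]}(x),\varphi_s^{[m]}(y))$ on monomials (plus the fact that $\varphi_s^{[m]}$ is a morphism of the non-extended superalgebras), and your case split by degree is the right one. A small inaccuracy: in your second case nothing at all ``remains'' --- $\psi_{\sigma,Str_0}$, the matrix cocycle and the counterterm all vanish identically there, since the bracket has no diagonal component and $Str$ kills elements of nonzero principal degree.

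The real weak point is the mechanism you propose in the essential case: there is no divergent sum to regularise, and the step ``regularise $\sum_{\nu\le 0}q^{-\nu k}$ as $q^{ak}/(1-q^k)$'' does not parse as a deduction. In $C(A,B)=Str([J,A]B)$ the bracket with $J=\sum_{\nu\le 0}E_{\nu\nu}$ annihilates all but the finitely many entries of $A$ whose row and column indices lie on opposite sides of $0$; hence for $x=z^rT_q^{p}M_{ij}$, $y=z^{-r}T_q^{p'}M_{ji}$ with $r>0$ one gets the finite geometric sum $\pm\sum_{\nu=1}^{r}(sq^{-\nu+t})^{p}(sq^{r-\nu+t})^{p'}$, the sign recording whether the surviving diagonal entries sit at integer or half-integer places (this is where $(-1)^{i}$ comes from). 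If $n=p+p'\neq 0$ its closed form is $\pm\, q^{an}q^{nt}(q^{-rp}-q^{rp'})/(1-q^{n})$ modulo $t^{m+1}$, which is exactly $(\widehat{\varphi}_s^{[m]}-\varphi_s^{[m]})$ applied to the diagonal part of $[x,y\}$, while $\psi_{\sigma,Str_0}(x,y)=0$; if $n=0$ the geometric ratio is $1$, the sum equals $\pm r q^{-rp}$, the counterterm contributes nothing (there is no correction on $M_{ii}$), and this value is precisely $\psi_{\sigma,Str_0}(x,y)$. So the factor $q^{ak}/(1-q^{k})$ enters only through the definition of the counterterm and is matched by the closed form of a finite sum; the ``regularisation'' picture is the familiar heuristic for why the counterterm has this shape, but taken literally it both introduces a spurious divergence and leaves unexplained why $\psi_{\sigma,Str_0}$ is exactly what is needed in the $T_q$-degree-zero component. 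Replace that sentence by the finite computation above (carried out for the four pairings $M_{ii}$--$M_{ii}$, $M_{12}$--$M_{21}$, $M_{21}$--$M_{12}$) and your argument is complete.
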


We shall need the following Proposition, whose proof is completely
similar to Proposition 4.3 in [2].

\begin{proposition}\label{prop432} Let $V$ be a quasifinite  
$\widehat{S{\frak S}}_q
$-module. Then the action of $\widehat{S{\frak S}}_q $ on $V$ naturally 
extends
to the action of $(\widehat{{S\frak S}}_q^{ \cal O})_k$ on $V$ for any 
$k\neq
0$.
\end{proposition}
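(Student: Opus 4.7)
The plan is to follow the strategy of Proposition 4.3 in [2], adapted to the $\Z_2$-graded setting. Fix $k\neq 0$, indices $i',j'\in\{1,2\}$, and a homogeneous vector $v\in V_j$. The first step is to establish that
\begin{equation*}
\mathrm{Ann}_v:=\{f(w)\in\C[w,w^{-1}]\,:\, z^k f(T_q)M_{i'j'}\cdot v=0\}
\end{equation*}
is an ideal of finite codimension in $\C[w,w^{-1}]$. Finite codimension is immediate from quasifiniteness, since the map $f\mapsto z^k f(T_q)M_{i'j'}\cdot v$ has image in a single finite-dimensional graded component of $V$. The ideal property uses the commutation identity $g(T_q)\cdot z^k f(T_q)M_{i'j'}=z^k g(q^k T_q)f(T_q)M_{i'j'}$ together with the fact that, since $|q|\neq 1$, the substitution $g(w)\mapsto g(q^k w)$ is a surjective automorphism of $\C[w,w^{-1}]$.

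Such an ideal is generated by a polynomial $p_v(w)=\prod_\alpha(w-s_\alpha)^{n_\alpha+1}$ with $s_\alpha\in\C^{\times}$. The next step is to extend the action to $f\in\mathcal{O}$: for each $f$ I would choose a Laurent polynomial $\tilde f$ whose derivatives at each $s_\alpha$ agree with those of $f$ up to order $n_\alpha$, and declare $z^k f(T_q)M_{i'j'}\cdot v:=z^k\tilde f(T_q)M_{i'j'}\cdot v$. Two such approximants differ by an element of $(p_v)\subseteq\mathrm{Ann}_v$, so the prescription is well defined. Linearity in $v$ follows by intersecting the finitely many relevant ideals (still of finite codimension) and picking a single $\tilde f$ that simultaneously approximates $f$ to the required orders at all of the resulting points.

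Finally, one must verify that the extended prescription respects the Lie superbracket of $\widehat{{S\frak S}_q^{\mathcal{O}}}$. For homogeneous operators $a$ and $b$ of nonzero degrees $k$ and $l$ with $k+l\neq 0$, both sides of $[a,b]\cdot v=(ab-(-1)^{|a||b|}ba)\cdot v$ arise as limits of the corresponding polynomial identities, and equality passes to the limit by well-definedness. The main obstacle will be the case $k+l=0$: the bracket then produces the central term $\psi_{\sigma,Str_0}(a,b)C$, and one must check that this cocycle, as given by the explicit formula in Section~3, depends only on the Taylor data of the symbols of $a$ and $b$ at the finitely many points of the form $q^m s_\alpha$ that appear in the annihilator conditions for $v$ and for $b\cdot v$. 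Since that formula is a finite sum whose summands only evaluate the symbols at such points, any sufficiently accurate polynomial approximation produces the same cocycle value, and the action extends as claimed.
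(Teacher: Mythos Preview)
Your overall strategy is exactly the one the paper has in mind: the paper gives no proof and simply says the argument is ``completely similar to Proposition~4.3 in~[2]'', i.e.\ the Kac--Radul approximation argument you are reproducing. Two points, however, deserve correction.

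First, your justification that $\mathrm{Ann}_v$ is an ideal is not valid as written. The identity $g(T_q)\cdot z^k f(T_q)M_{i'j'}=z^k g(q^kT_q)f(T_q)M_{i'j'}$ lives in the \emph{associative} superalgebra $S\frak S_q^{as}$, but $V$ is only a module over the Lie superalgebra $\widehat{S\frak S_q}$, so you cannot apply $g(T_q)$ as an associative operator on $V$. What one actually uses is the Lie bracket
\[
[\,g(T_q)I,\; z^k f(T_q)M_{i'j'}\,]=z^k\bigl(g(q^kT_q)-g(T_q)\bigr)f(T_q)M_{i'j'},
\]
and this forces you to work not with $\mathrm{Ann}_v$ for a single $v$ but with $\bigcap_{w\in V_j}\mathrm{Ann}_w$: if $f$ kills every $w\in V_j$, then the bracket identity shows $(g(q^kw)-g(w))f(w)$ does too, and since $|q|\neq 1$ and $k\neq 0$ the span of $\{g(q^kw)-g(w)\}$ together with constants is all of $\C[w,w^{-1}]$. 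This intersection is still of finite codimension because $\dim V_j<\infty$, so the rest of your approximation argument goes through unchanged.

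Second, your final claim about the cocycle is incorrect: the formula for $\psi_{\sigma,Str_0}$ involves the \emph{constant Laurent coefficients} $(f(q^{-r+m}T_q)g(q^mT_q))_0$, not evaluations of $f$ and $g$ at points $q^m s_\alpha$, so matching finitely many Taylor jets does not pin down the cocycle value. Fortunately this paragraph is not needed: the proposition only asserts that each graded piece $(\widehat{S\frak S_q^{\cal O}})_k$ with $k\neq 0$ acts in a well-defined way extending the polynomial action, which is all that is used in Theorem~\ref{teo:MI}. No compatibility with the bracket in degree~$0$ is being claimed.
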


We return now to the $\frac{1}{2}\Z$-graded complex Lie
superalgebra $\widehat{\frak {gl}}_{\infty|\infty}[m]$.

An element  $\lambda \in (\widehat{\frak {gl}}_{\infty |
\infty}[m])^*_0$ is characterized by its {\it labels}

\begin{equation}
\,\lambda_{k}^{(j)}=\lambda(t^jE_{kk}), \quad k\in
\frac{1}{2}\Z,\quad j=0\cdots,m
\end{equation}
and {\it central charges}
\begin{equation}
c_j=\lambda(t^j) \quad j=0,\cdots,m.
\end{equation}

As usual, we have the irreducible highest weight $\widehat{\frak
{gl}}_{\infty | \infty}[m]$-module \linebreak $L(\widehat{\frak
{gl}}_{\infty | \infty}[m],\lambda)$ associated to $\lambda$. We
will prove the following

\begin{theorem}\label{quasi-{gl}}  The $\widehat{\frak {gl}}_{\infty
| \infty}[m]-\hbox{module} \quad L(\lambda)$ is quasifinite if and
only if for each  $l=0\cdots,m$  all but finitely many $k\in
\frac{1}{2}\Z$,
\begin{equation}\label{eq-lambdas}
\lambda_k^{(l)}+\lambda_{k-\frac{1}{2}}^{(l)}+\delta_{k,\frac{1}{2}}c_l=0.
\end{equation}
\end{theorem}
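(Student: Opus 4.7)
The plan is to apply Theorem \ref{teo:quasi} to $\widehat{\frak{gl}}_{\infty|\infty}[m]$, and so first I would verify the structural hypotheses $(SP_1)$--$(SP_3)$. Properties $(SP_1)$ and $(SP_2)$ are immediate from the matrix-unit description of the superbracket, while for $(SP_3)$ one shows that any parabolic whose $-\tfrac12$ piece has finite codimension in $\frak{g}_{-1/2}$ contains an element of the form $a=\sum_{k} a_{k,0} E_{k,k-1/2}$ with $a_{k,0}\neq 0$ for all but finitely many $k$, which will be seen below to be non-degenerate.

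Next I would describe $\frak{g}_{\pm 1/2}$ and compute the bracket. A general element of $\frak{g}_{-1/2}$ has the form $a=\sum_{k\in\frac12\Z}\sum_{l=0}^m a_{k,l}\,t^l E_{k,k-1/2}$. Using $E_{ij}E_{rs}=\delta_{jr}E_{is}$ together with the 2-cocycle $C(A,B)=Str([J,A]B)$, a direct computation gives
\[
[t^{l'}E_{k-\frac12,k},\,t^lE_{i,i-\frac12}]=\delta_{i,k}\,t^{l+l'}\bigl(E_{k-\frac12,k-\frac12}+E_{k,k}\bigr)+\delta_{k,\frac12}\delta_{i,\frac12}\,t^{l+l'},
\]
the central term arising solely because $[J,E_{0,\frac12}]=E_{0,\frac12}$ and $Str(E_{0,0})=1$. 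Applying $\lambda$ and introducing
\[
\mu_k^{(n)}:=\lambda_k^{(n)}+\lambda_{k-\frac12}^{(n)}+\delta_{k,\frac12}\,c_n,
\]
the condition $\lambda([\frak{g}_{1/2},a])=0$ becomes
\[
\sum_{l=0}^m a_{k,l}\,\mu_k^{(l+l')}=0 \qquad\text{for every } k\in\tfrac12\Z \text{ and every } l'=0,\dots,m.
\]

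I would then repackage this as a single identity in $R_m=\C[t]/(t^{m+1})$: setting $a_k(t)=\sum_l a_{k,l}t^l$ and $q_k(t)=\sum_{n=0}^m\mu_k^{(n)}t^n$, the system above is equivalent to $a_k(t)\,q_k(t)=0$ in $R_m$ for every $k$. In parallel, by adapting Lemma \ref{prop ideal} and Proposition \ref{prop p1}, I would show that $a$ is non-degenerate precisely when $a_k(t)$ is a unit of $R_m$ (equivalently $a_{k,0}\neq 0$) for all but finitely many $k$. Combining the two, if $L(\lambda)$ is quasifinite with non-degenerate witness $a$ given by Theorem \ref{teo:quasi}, then for almost every $k$ the factor $a_k(t)$ is invertible in $R_m$, forcing $q_k(t)=0$, which is exactly (\ref{eq-lambdas}). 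Conversely, when (\ref{eq-lambdas}) holds, $q_k(t)=0$ for all but finitely many $k$; setting $a_{k,l}=\delta_{l,0}$ for those $k$ (and $a_{k,l}=0$ for the finite set of exceptions) yields a non-degenerate $a$ annihilating $\lambda([\frak{g}_{1/2},-])$, and Theorem \ref{teo:quasi} delivers quasifiniteness.

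The main obstacle I anticipate is the careful verification of the non-degeneracy characterization: one must check that the iterated brackets $[\frak{g}_0,[\cdots[\frak{g}_0,a]\cdots]]$ extract precisely a copy of the ideal $(a_k(t))\subset R_m$ at each position $E_{k,k-1/2}$, so that $\frak{p}^a_{-1/2}$ has finite codimension in $\frak{g}_{-1/2}$ exactly under the unit condition on the $a_k$, and subsequently that the higher pieces $\frak{p}^a_{-j}$ inherit finite codimension through super-commutation with $\frak{p}^a_{-1/2}$. A secondary but harmless subtlety is that the cocycle contributes only at $k=1/2$, precisely accounting for the $\delta_{k,\frac12}c_l$ shift already built into $\mu_k^{(l)}$.
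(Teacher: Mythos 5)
Your proposal is correct and follows essentially the same route as the paper: one verifies $(SP_1)$--$(SP_3)$ for $\widehat{\frak{gl}}_{\infty|\infty}[m]$, characterizes the non-degenerate elements of degree $-\tfrac12$ through the ideals of $R_m$ generated by the coefficients $a_k(t)$ (the paper's Remark \ref{degen} and Lemma \ref{degenerate-{gl}}), applies Theorem \ref{teo:quasi}(2), computes $\lambda([\frak{g}_{1/2},a])$ including the cocycle contribution exactly at $k=\tfrac12$, and for the converse takes $a=\sum E_{k,k-\frac12}$ supported off the finite exceptional set. The only small inaccuracy is your claim that the system $\sum_l a_{k,l}\,\mu_k^{(l+l')}=0$, $l'=0,\dots,m$, is \emph{equivalent} to $a_k(t)q_k(t)=0$ in $R_m$ (it is a triangular, not a convolution, system), but since you invoke it only when $a_{k,0}\neq 0$, downward induction on $l'$ still gives $q_k(t)=0$, so the argument and the conclusion are unaffected.
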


\begin{remark} The case $m=0$ of this Theorem was proved in [8]. However 
using Theorem \ref{teo:quasi}, even this proof can be simplified.
\end{remark}
In order to apply Theorem \ref{teo:quasi} we need to show  that
the superalgebra $\widehat{\frak {gl}}_{\infty | \infty}[m]$
satisfies conditions ($SP_1$), ($SP_2$) and ($SP_3$) introduced in
Section 2.

The fact that $({\widehat{\frak {gl}}_{\infty | \infty}[m]})_0$ is
commutative is straightforward, thus we have ($SP_1$).

Let us check ($SP_2$).
Take $a=\sum_{j\in\frac{1}{2}\Z} a_j\,E_{j+k,j}\in(\widehat{\frak 
{gl}}_{\infty |\infty}[m])_{-k}$ with
$k\in \frac{1}{2}\Z, \, k>0$, such that $[a,b]=0$ for all $b\in
(\widehat{\frak {gl}}_{\infty | \infty}[m])_{\frac{1}{2}}$. In
particular for $b=E_{s-\frac{1}{2},s}$ for any $s\in
\frac{1}{2}\Z$. Thus we have that
$$0=a_{s-\frac{1}{2}}E_{s-\frac{1}{2}+k,s}-(-1)^{2k}a_{s-k}\,E_{s-\frac{1}{2},s-k}.$$
Since $k>0, \, E_{s+\frac{1}{2}-k,s}$ and $E_{s+\frac{1}{2},s+k}$
are   linearly independent, then $a_{s-\frac{1}{2}}=0$ and
$a_{s-k}=0$  for all $s\in\frac{1}{2}\Z$.
We conclude that $a=0$ proving $(SP_2)$.

\begin{remark}\label{degen}
Take $a=\sum_{j\in \frac{1}{2}\Z}a_j(t)\,E_{j+\frac{1}{2},\,j}\in
(\widehat{\frak {gl}}_{\infty | \infty}[m])_{-\frac{1}{2}}$ and
recall that by definition $$ \frak
p_{-\frac{1}{2}}^{a}=\sum[\cdots[[a,(\widehat{\frak {gl}}_{\infty
| \infty}[m])_0],(\widehat{\frak {gl}}_{\infty
|\infty}[m])_0]\cdots].$$

So letting $b=t^sE_{ii} \in (\widehat{\frak {gl}}_{\infty |\infty}[m])_0,$
we have in particular that\linebreak
$[a,b]=a_i(t)t^sE_{i+\frac{1}{2},i}-a_{i-\frac{1}{2}}(t)t^sE_{i,i-\frac{1}{2}}
\in \frak p_{-\frac{1}{2}}^{a}$. Then, for arbitrary $k$
\begin{align*}
\,[[a,b],E_{kk}]&=\delta_{ik}\big(a_k(t)t^sE_{k+\frac{1}{2},k}
+a_{k-\frac{1}{2}}(t)t^sE_{k,k-\frac{1}{2}}\big)
\nonumber\\
\,&-\delta_{i+\frac{1}{2},k}\,a_{k-\frac{1}{2}}(t)\,t^s\,E_{k,k-\frac{1}{2}}
-\delta_{i-\frac{1}{2},k}\,a_{k}(t)\,t^s\,E_{k+\frac{1}{2},k}\in
\frak p_{-\frac{1}{2}}^{a}.
\end{align*}
Choosing  $k=i+\frac{1}{2},$ we show that
$a_i(t)t^sE_{i+\frac{1}{2},i}\in \frak p _{-\frac{1}{2}}^{a}$ {for
all} $i\in\frac{1}{2}$ and  $s=0,\cdots,m.$

Let $I_{a_i(t)}$, $i\in\frac{1}{2}\mathbb{Z}$, the ideal of $R_m$
generated  by the corresponding $a_i(t)$. Thus we have shown that
\begin{equation}\label{degen-1}
\prod I_{a_i(t)}\,E_{i+\frac{1}{2},i}\subseteq \frak
p_{-\frac{1}{2}}^{a}.
\end{equation}
Computing the bracket between $a$ and an arbitrary element in
($\widehat{\frak{g}l}_{\infty|\infty}[m])_0$ it is easy to show
that equality holds in (\ref{degen-1}). Now, since $\frak
p_{-k-\frac{1}{2}}^a=[\frak p_{-k}^a,\frak p_{-\frac{1}{2}}^a]$,
inductively, it is straightforward to show that

\begin{equation}\label{degen-2}
\prod I_{a_i(t)}\,E_{i+k,i}\subseteq \frak p_{-k}^{a},
\end{equation}
for all $k\in\frac{1}{2}\mathbb{N}$.

\end{remark}
In order to check $(SP_3)$ first  we will describe the
non-degenerate elements of
$\widehat{\frak{g}l}_{\infty|\infty}[m]$ in the following

\begin{lemma}\label{degenerate-{gl}} An element   
$a=\sum_ja_j(t)E_{j+\frac{1}{2},j}\in (\widehat{\frak {gl}}_{\infty | 
\infty}[m])_{-\frac{1}{2}}$ is non-degenerate  if and only if  
$a_j(t)\in\C-\{0\}$ for all  but finitely many $j\in
\frac{1}{2}\Z$.
\end{lemma}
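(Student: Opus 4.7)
The plan is to leverage the explicit description of $\frak p^{a}_{-1/2}$ from Remark~\ref{degen} together with the recursion $\frak p^{a}_{-k-1/2}=[\frak p^{a}_{-1/2},\frak p^{a}_{-k}]$ that builds up the deeper graded pieces of $\frak p^a$. Identifying $(\widehat{\frak {gl}}_{\infty|\infty}[m])_{-1/2}=\bigoplus_{i\in\frac{1}{2}\Z}R_m\,E_{i+1/2,i}$, the equality in~(\ref{degen-1}) reads $\frak p^{a}_{-1/2}=\bigoplus_{i}I_{a_i(t)}\,E_{i+1/2,i}$, and so the codimension of $\frak p^{a}_{-1/2}$ in the ambient piece equals $\sum_{i}\dim_{\C}(R_m/I_{a_i(t)})$. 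Since $R_m=\C[t]/(t^{m+1})$ is local with maximal ideal $(t)$, each summand $R_m/I_{a_i(t)}$ vanishes precisely when $a_i(t)$ is a unit in $R_m$, which, in terms of its residue in $R_m/(t)\simeq\C$, is the condition $a_i(t)\in\C\setminus\{0\}$ appearing in the statement.

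With this dictionary in hand, the ``only if'' direction is immediate: non-degeneracy of $a$ forces $\frak p^{a}_{-1/2}$ to have finite codimension, which forces the displayed sum to be finite, which in turn forces $a_i(t)$ to be a unit for all but finitely many $i$. For the converse, I assume the unit condition on almost every $a_i(t)$ and must show finite codimension of $\frak p^{a}_{-k}$ for each $k\in\frac{1}{2}\N$. A direct computation of the super bracket (an anticommutator) of two odd generators $a_i(t)\,r\,E_{i+1/2,i}$ and $a_j(t)\,s\,E_{j+1/2,j}$ places the coefficient of $E_{i+1,i}$ in $\frak p^{a}_{-1}$ inside $I_{a_i\,a_{i+1/2}}$, and an induction via the recursion above sharpens this to the inclusion~(\ref{degen-2}): the coefficient of $E_{i+k,i}$ in $\frak p^{a}_{-k}$ lies in the ideal generated by the telescoping product $a_i(t)\,a_{i+1/2}(t)\cdots a_{i+k-1/2}(t)$. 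Outside the finitely many indices where some $a_j(t)$ fails to be a unit, every such telescoping product is itself a unit, so the corresponding ideal equals $R_m$; hence $\frak p^{a}_{-k}$ has finite codimension for every $k\in\frac{1}{2}\N$, establishing non-degeneracy.

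The main obstacle is the inductive step of the ``if'' direction: one must carefully track the anticommutator signs and the non-vanishing patterns of the matrix-unit products $E_{pq}E_{rs}$ in the iterated brackets to ensure that the telescoping product actually appears as a generator at every position $E_{i+k,i}$, with no unexpected cancellation collapsing it into a proper sub-ideal. Once that bookkeeping is in place, the finite-codimension conclusion falls out of the simple unit-or-non-unit dichotomy in the local ring $R_m$ at each diagonal index.
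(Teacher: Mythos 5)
Your proof is correct and takes essentially the same route as the paper: both directions rest on Remark \ref{degen}, namely the equality in (\ref{degen-1}) and the inclusions (\ref{degen-2}), with non-degeneracy translated into finite codimension of each $\frak p^{a}_{-k}$ via the unit/non-unit dichotomy in the local ring $R_m$. Your reading of (\ref{degen-2}) through the telescoping products $a_i(t)a_{i+\frac{1}{2}}(t)\cdots a_{i+k-\frac{1}{2}}(t)$ is precisely the intended inductive content, so nothing essential differs from the paper's argument.
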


\begin{proof}
Suppose that $a$ is non-degenerate, that is, $\frak p_{-j}^{a}$
have  finite codimension in $(\widehat{\frak {gl}}_{\infty |
\infty}[m])_{-j}$ for all $j\in\frac{1}{2}\mathbb{N}$. In
particular $\frak p_{-\frac{1}{2}}^{a}$ have finite codimension in
$(\widehat{\frak {gl}}_{\infty | \infty}[m])_{-\frac{1}{2}}$.
Thus, since equality holds in (\ref{degen-1}), $a_i(t)\in
\mathbb{C}-\{0\}$ for all but  finitely many $i\in \frac{1}{2}\Z$.

The converse statement follows immediately  from (\ref{degen-1})
and (\ref{degen-2}).\end{proof}

Let us check $(SP_3)$. Let $\frak p$ non-degenerate. In particular
$\frak p_{-\frac{1}{2}}$ have finite codimension on
$({\widehat{\frak {gl}}_{\infty |\infty}[m])_{-\frac{1}{2}}}$.
Thus $\frak p_{-\frac{1}{2}}=\prod_i I_iE_{i+\frac{1}{2},i}$ with
$I_i\subseteq R_m$ a subspace such that $I_i=R_m$ for all  but  finitely
many $i$. Let $K$ be such finite set. Thus, by Lemma
\ref{degenerate-{gl}},  $a=\sum_{j\in K^c}E_{j-\frac{1}{2},j} \in
\frak p_{-\frac{1}{2}}$ is non-degenerate and by definition $\frak
p^a\subseteq \frak p$.

Now we can prove Theorem \ref{quasi-{gl}}.

\begin{proof} Let $\lambda \in (\widehat{{\frak{gl}}}_{\infty| \infty}[m])_0 
^*$.
By Theorem \ref{teo:quasi} and Lemma \ref{degenerate-{gl}},
$L(\lambda)$ is quasifinite if and only if  there exists $a=\sum_i
a_i(t)E_{i+\frac{1}{2},i}\in(\widehat{{\frak{gl}}}_{\infty |
\infty}[m])_{-\frac{1}{2}}$ with $ a_i\in
\mathbb{C}-\{0\} $ for all but finitely many $i$ and
$\lambda([a,b])=0$ for all $b\in (\widehat{{\frak{gl}}}_{\infty|
\infty}[m])_{\frac{1}{2}}.$

Suppose that $L(\lambda)$ is quasifinite, thus an element $a$ as
above exists. Let $I=\{i\in \frac{1}{2}\Z: \,a_i  \not\in
\mathbb{C}-\{0\}\}.$ Note that $|I|<\infty.$ Consider $k\in I^c,
|k|>>0$ such that if $k \in I^c,$ then $k-\frac{1}{2} \in I^c.$

Take  $b=t^lE_{k-\frac{1}{2},k} \in ({\widehat{{\frak
{gl}}}_{\infty | \infty}[m])_{\frac{1}{2}}}$ with $l=0,\cdots,m$.

Since
$[a,b]=a_{k-\frac{1}{2}}t^l\left(E_{kk}+E_{k-\frac{1}{2},k-\frac{1}{2}}+\delta_{k,\frac{1}{2}}\right)$,
then $\lambda([a,b])=0$ implies
\begin{align*}
0&=\lambda\left(t^lE_{k,k}+t^lE_{k-\frac{1}{2},k-\frac{1}{2}}+t^l\delta_{k,\frac{1}{2}}\right)\\
&=\lambda_k^{(l)}+\lambda_{k-\frac{1}{2}}^{(l)}
+\delta_{k,\frac{1}{2}}c_l\quad\hbox{ for all} \quad k\in I^c, \,
\hbox{ with} \quad|k|>>0.
\end{align*}
Hence,  quasifiniteness of  the $\widehat{\frak {gl}}_{\infty |
\infty}[m]$-module $L(\lambda)$   implies (\ref{eq-lambdas}).

Conversely,  assume that (\ref{eq-lambdas}) holds. Denote by $I$
the finite set where this condition is not satisfied.
Let $0<<N$   such that if
$i \in I^c$ and $|i|>N,$ then $i\pm \frac{1}{2}\in I^c.$ Set
$a=\sum_{|i|>N}E_{i,i-\frac{1}{2}} \in {\widehat{\frak
{gl}}_{\infty |\infty}[m]_{-\frac{1}{2}}},$ . By Lemma
\ref{degenerate-{gl}}  $a$ is non-degenerate.

Consider an arbitrary element $b=\sum_jb_j(t)E_{j-\frac{1}{2},j}
\in {\widehat{\frak {gl}}_{\infty | \infty}[m]_{\frac{1}{2}}}$.
Write each $b_j(t)=\sum_{l=0}^m B_l^j\,t^l$. Then
\begin{align*}
\quad \lambda([a,b])&=\lambda\Big(\sum_{j,
|i|>N}b_j(t)[E_{i,i-\frac{1}{2}},E_{j-\frac{1}{2},j}]+\sum_{j,|i|>N}b_j(t)C(E_{i,i-\frac{1}{2}},E_{j-\frac{1}{2},j})\Big)\nonumber\\
\,&=\lambda\Big(\sum_{
|i|>N}b_{i}(t)(E_{i-\frac{1}{2},i-\frac{1}{2}}+E_{i,i})+\sum_{|i|>N}b_{i}(t)C(E_{i,i-\frac{1}{2}},E_{i-\frac{1}{2},i})\Big)\nonumber\\
&\,=\sum_{l=0}^m\sum_{
|i|>N}B_l^{i}\Big(\lambda(t^l\,E_{i-\frac{1}{2},i-\frac{1}{2}})+\lambda(t^lE_{i,i}))+\delta_{i,\frac{1}{2}}c_l\Big)\nonumber\\
\,&=0
\end{align*}
finishing the proof.\end{proof}

Given $\vec{m}=(m_1,\cdots,m_N)\in \Z_+^N,$ we define
$\widehat{\frak
{gl}}_{\infty|\infty}[\vec{m}]=\bigoplus_{i=1}^N\widehat{\frak
{gl}}_{\infty|\infty}[m_i] $. By Proposition
\ref{shortexactsequence} we have a  surjective Lie superalgebra
homomorphism
$$\widehat{\varphi}_{\vec{s}}^{[\vec{m}]}=\bigoplus_{i=1}^N\widehat{\varphi}_{s_i}^{[m_i]}:\widehat{S\frak S_q}^{\cal O}\rightarrow \widehat{\frak
{gl}}_{\infty|\infty}[\vec{m}],$$

Choose a quasifinite   $\lambda_i \in (\widehat{\frak
{gl}}_{\infty|\infty}[m_i])_0^{\star}$ and let $L(\widehat{\frak
{gl}}_{\infty|\infty}[m_i],\lambda_i)$ be the corresponding
irreducible  $\widehat{\frak {gl}}_{\infty|\infty}[m_i]$-module.
Then
$$L(\widehat{\frak
{gl}}_{\infty|\infty}[\vec{m}],\vec{\lambda})=\bigotimes_{i=1}^nL(\widehat{\frak
{gl}}_{\infty|\infty}[m_i],\lambda_i)$$ where
$\vec{\lambda}=(\lambda_1,\cdots,\lambda_N)$ is an irreducible
$\widehat{\frak {gl}}_{\infty|\infty}[\vec{m}]$-module. Using the
homomorphism $\widehat{\varphi}_{\vec{s}}^{\vec{m}},$ we regard
$L(\widehat{\frak {gl}}_{\infty|\infty}[\vec{m}],\vec{\lambda})$
as a $\widehat{S\frak S_q}$-module,  which we denote by
$L_{\vec{s}}^{[\vec{m}]}(\vec{\lambda}).$

\begin{theorem}\label{teo:MI} Consider the embedding
$\widehat{\varphi}_{\vec{s}}^{[\vec{m}]}:\widehat{S\frak
S_q}\rightarrow \widehat{\frak {gl}}_{\infty|\infty}[\vec{m}]$
where $\vec{s}=(s_1,\cdots,s_N),$ with $s_i=q^{a_i}\in \mathbb{C}$
such that
$a_i-a_j \not\in\Z+\tau^{-1}\Z$ if $i\neq j$, and let $V$ be a quasifinite
$\widehat{\frak {gl}}_{\infty|\infty}[\vec{m}]$-module. Then any
$\widehat{S\frak S_q}$-submodule of $V$ is a $\widehat{\frak
{gl}}_{\infty|\infty}[\vec{m}]$-submodule as well. In particular,
the $\widehat{S\frak S_q}$-modules
$L_{\vec{s}}^{[\vec{m}]}(\vec{\lambda})$ are irreducible.

\begin{proof}Let  $U$ be a ($\frac{1}{2}\Z$-graded) $\widehat{S\frak
S_q}$-submodule de $V.$ $U$ is a quasifinite $\widehat{S\frak
S_q}-$module as well, hence by Proposition \ref{prop432}, it can
be extended to $(\widehat{S\frak S_q}^{\cal O})_k$ for any $k\neq
0.$ But the map $\widehat{\varphi}_{\vec{s}}^{[\vec{m}]}$ is
surjective for any $k\neq0.$ Thus  $U$ is invariant with respect
to all members of the principal gradations $(\frak {gl}
_{\infty|\infty}[\vec{m}])_k$ with $k\neq0.$ Since $\widehat{\frak
{gl}}_{\infty|\infty}[\vec{m}]$ coincides with its derived
algebra, this proves the theorem.
\end{proof}
\end{theorem}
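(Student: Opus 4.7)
The plan is to show that any $\widehat{S\frak S_q}$-submodule $U$ of $V$ is already stable under the whole $\widehat{\frak {gl}}_{\infty|\infty}[\vec m]$, from which the irreducibility statement for $L_{\vec s}^{[\vec m]}(\vec\lambda)$ drops out immediately. The three ingredients I would combine are the quasifiniteness of $V$, Proposition \ref{prop432} (which extends the $\widehat{S\frak S_q}$-action to quantum pseudo-differential operators of infinite order in each nonzero degree), and Proposition \ref{shortexactsequence} (which asserts surjectivity of $\varphi_{\vec s}^{[\vec m]}$).

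First I would take $U$ to be a $\frac{1}{2}\Z$-graded submodule; this is automatic because $\widehat\varphi_{\vec s}^{[\vec m]}$ respects the principal gradation and $V$ inherits that grading. Since $V$ is quasifinite, so is $U$, and Proposition \ref{prop432} then extends the $\widehat{S\frak S_q}$-action on $U$ to an action of $(\widehat{S\frak S_q}^{\cal O})_k$ for every $k\neq 0$. Surjectivity of $\widehat\varphi_{\vec s}^{[\vec m]}$ at each nonzero degree follows from Proposition \ref{shortexactsequence}, after noting that the corrective term distinguishing $\widehat\varphi_{\vec s}^{[\vec m]}$ from $\varphi_{\vec s}^{[\vec m]}$ only affects the degree-zero piece (it is supported on $T_q^kM_{ii}$ and on $C$). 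Thus $U$ is stable under every $(\widehat{\frak {gl}}_{\infty|\infty}[\vec m])_k$ with $k\neq 0$.

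The delicate point, which I expect to be the main obstacle, is absorbing the degree-zero component. Here I would use that $\widehat{\frak {gl}}_{\infty|\infty}[\vec m]$ is perfect: every element of $(\widehat{\frak {gl}}_{\infty|\infty}[\vec m])_0$ can be written as a finite sum of brackets $[x,y]$ with $x\in (\widehat{\frak {gl}}_{\infty|\infty}[\vec m])_k$ and $y\in (\widehat{\frak {gl}}_{\infty|\infty}[\vec m])_{-k}$ for various $k\neq 0$, and the central generators $t^j\in R_m$ themselves appear as $C(E_{i,j},E_{j,i})$ for suitable $i,j$ on opposite sides of $J$. Since $U$ is already invariant under each such $x$ and $y$, it is invariant under $[x,y]$, and so $U$ is a $\widehat{\frak {gl}}_{\infty|\infty}[\vec m]$-submodule. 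The final claim is then immediate: any proper nonzero $\widehat{S\frak S_q}$-submodule of $L_{\vec s}^{[\vec m]}(\vec\lambda)$ would be a proper nonzero $\widehat{\frak {gl}}_{\infty|\infty}[\vec m]$-submodule, contradicting the construction of $L_{\vec s}^{[\vec m]}(\vec\lambda)$ as an irreducible tensor product of irreducible modules over the bigger algebra.
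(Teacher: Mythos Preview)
Your proposal is correct and follows essentially the same approach as the paper's proof: extend the action via Proposition~\ref{prop432}, use surjectivity of $\widehat\varphi_{\vec s}^{[\vec m]}$ in nonzero degrees, and then invoke perfectness of $\widehat{\frak{gl}}_{\infty|\infty}[\vec m]$ to absorb degree zero. You supply more detail than the paper (explicitly citing Proposition~\ref{shortexactsequence} for surjectivity and noting that the corrective terms live only in degree zero), but the structure is identical.
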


By Proposition  \ref{prop432} and Theorem \ref{teo:MI}, the
$\widehat{S\frak S_q}$-modules
$L_{\vec{s}}^{[\vec{m}]}(\vec{\lambda})$ are irreducible
quasifinite highest weight modules. Let us calculate the labels
$\Delta_{m,s,\lambda,k,i}$, with $i=1,2$, of the highest weight
and the central charge $c$ of the $\widehat{S\frak S_q}$-modules
$L_{{s}}^{[{m}]}({\lambda})$. We have ($k\neq 0$)
\begin{align}
\Delta_{m,s,\lambda,k,1}&=\lambda 
\big(\widehat{\varphi}_{s}^{[m]}(T_q^kM_{11})\big)\nonumber\\
\,&=\sum_{l=0}^m\frac{(k\log
q)^l}{l!}\Big[\sum_j(sq^{-j})^k\lambda_j^{(l)}+\frac{q^{ak}}{1-q^k}c_l\Big],
\end{align}

\begin{align}
\Delta_{m,s,\lambda,k,2}&=\lambda 
\big(\widehat{\varphi}_{s}^{[m]}(T_q^kEM_{22})\big)\nonumber\\
\,&=\sum_{l=0}^m\frac{(k\log
q)^l}{l!}\Big[\sum_j(sq^{-j})^k\lambda^{(l)}_{j-\frac{1}{2}}-\frac{q^{ak}}{1-q^k}c_l\Big]
\end{align} 
and for $k=0$
$$\Delta_{m,s,\lambda,0,i}=\lambda(\widehat{\varphi}_{s}^{[m]}(E_{ii}))=\begin{cases}
\sum_j\lambda(E_{jj})=\sum_j\lambda^{(0)}_j &\quad\hbox{ if} \quad
i=1\nonumber\\
\sum_j\lambda(E_{j-\frac{1}{2}j-\frac{1}{2}})=\sum_j\lambda^{(0)}_{j-\frac{1}{2}}
&\quad\hbox{ if} \quad i=2.\nonumber\\
\end{cases}$$

The following Theorem shows that any irreducible quasifinite
highest weight module $L(\widehat{S\frak S_q},\lambda)$ can be
obtained in a unique way. The proof of this result follows by the
same argument used in Theorem 4.8 in [2] using the formulas
above.

\begin{theorem} Let $L=L(\widehat{S\frak S_q},\lambda)$ be an
irreducible quasifinite highest weight module with central charge c and
$$\Delta_{n,1}+\Delta_{n,2}=P_{21}(n)\quad\hbox{ and }\quad 
\Delta_{n,2}=\frac{P_{12}(n)-P_{21}(n)}{q^n-1}$$
for $n\neq 0$, where $P_{12}(x)$ y $P_{21}(x)$ are quasipolynomial
such that $P_{12}(0)-P_{21}(0)=c.$ We write $P_{ij}(x)=\sum_{a \in
\mathbb{C}} P_{ij,a}(x\log q)\,q^{ax} \hbox{ where }
P_{ij,a}(x\log q)\,$   are polynomials . We decompose the set
$\{a\in \mathbb{C}\,:\, P_{ij,a}(x\log q)\neq 0 \} $ in a disjoint
union of congruence classes  $\hbox{mod }\, \Z+\tau^{-1}\Z.$ Let
$S=\{a,a-k_1,a-k_2,\cdots \}$ be such a congruence class,  let
$m=\max_{a\in S} \deg P_{ij,a}(x\log q),$ and let \linebreak
$$
h^{(l)}_{k_r-\frac{1}{2}}=\left(\frac{d}{dx}\right)^l\,P_{21,a-k_r}(0)
\qquad \hbox{ and }\qquad
h^{(l)}_{k_r}=\left(\frac{d}{dx}\right)^l\,P_{12,a-k_r}(0) .$$ We
associate to $S$ the $\,\widehat{\frak
{gl}}_{\infty|\infty}[m]$-module $L^{[m]}(\lambda_s)$ with the
central charges
\begin{equation}
c_l=\sum_{k_r}(h^{(l)}_{k_r-\frac{1}{2}}-h^{(l)}_{k_r}),
\end{equation}
and labels
\begin{equation}
\lambda_{i}^{(l)}=\sum_{k_r>i}
\left(\widetilde{h}^{(l)}_{k_r}-h^{(l)}_{k_r-\frac{1}{2}}\right)\qquad\hbox{
and }\qquad\lambda_{i-\frac{1}{2}}^{(l)}=\sum_{k_r\geq i}
\left(h^{(l)}_{k_r-\frac{1}{2}}-
\widetilde{h}^{(l)}_{k_r+1}\right)
\end{equation}
for $i\in\Z$ and $\widetilde{h}^{(l)}_{k}=h^{(l)}_{k}+\delta_{k,0}
c_l$. Then the  $\widehat{S\frak S_q}$-module $L$ is isomorphic to
the tensor product of all the modules  $L_s^{[m]}(\lambda_S)$.

\end{theorem}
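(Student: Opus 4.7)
The plan is to invert the assignment
$(\vec\lambda,\vec s,\vec m)\mapsto\bigl(\Delta_{n,1},\Delta_{n,2},c\bigr)$
given by formulas $(4.17)$ and $(4.18)$ and then invoke
Theorem~\ref{teo:MI} for irreducibility. First, by Theorem~\ref{teo:QFM}(ii),
the labels of $L$ are encoded by the two quasipolynomials $P_{12}(x)$,
$P_{21}(x)$ satisfying $P_{12}(0)-P_{21}(0)=c$; the hypotheses express
$\Delta_{n,1}+\Delta_{n,2}$ and $\Delta_{n,2}$ (hence $\Delta_{n,1}$ itself)
as values at $n$ of these quasipolynomials. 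Writing
$P_{ij}(x)=\sum_{a}P_{ij,a}(x\log q)\,q^{ax}$ and grouping the exponents $a$
into congruence classes modulo $\Z+\tau^{-1}\Z$ amounts to fixing, for each
class $S$, a base point $s=q^a\in\C^\times$ (well defined up to integer
powers of $q$, which is exactly the ambiguity absorbed by the summation
$\sum_j(sq^{-j})^k$ appearing in the formulas for $\Delta_{m,s,\lambda,k,i}$).

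Second, for a fixed congruence class $S=\{a,a-k_1,a-k_2,\dots\}$, set
$m=\max_{a\in S}\deg P_{ij,a}(x\log q)$ and read off the coefficients
$h^{(l)}_{k_r}$ and $h^{(l)}_{k_r-1/2}$ from $P_{12}$ and $P_{21}$ as in
the statement. Define $\lambda_S\in(\widehat{\frak{gl}}_{\infty|\infty}[m])_0^{\ast}$
by the prescribed telescoping formulas for $\lambda_i^{(l)}$ and
$\lambda_{i-1/2}^{(l)}$ and the central charges $c_l$. The telescoping
was designed precisely so that
$\lambda_{i}^{(l)}+\lambda_{i-1/2}^{(l)}+\delta_{i,1/2}c_l=0$
for all but finitely many $i$, so by Theorem~\ref{quasi-{gl}} the module
$L(\widehat{\frak{gl}}_{\infty|\infty}[m],\lambda_S)$ is quasifinite.

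Third, form the tensor product
$\bigotimes_S L_s^{[m]}(\lambda_S)$, pulled back to a
$\widehat{S\frak S_q}$-module via the homomorphism
$\widehat{\varphi}_{\vec s}^{[\vec m]}$ of
Proposition~\ref{shortexactsequence}; it is well defined because the base
points $s=q^a$ chosen from distinct classes satisfy the genericity
condition~(\ref{discreteseq}). Apply the label formulas $(4.17)$ and
$(4.18)$ to each tensor factor and add: after substituting
$s=q^{a-k_r}$ in the expansion $(sq^{-j})^k=q^{(a-k_r)k}q^{-jk}$ and
applying Abel summation to the telescoping sums defining
$\lambda_i^{(l)}$, the sums $\sum_j(sq^{-j})^k\lambda^{(l)}_j$ collapse to
the coefficients $h^{(l)}_{k_r}$ (or $h^{(l)}_{k_r-1/2}$), so that the
contribution of the class $S$ to $\Delta_{k,1}$ (resp.\ $\Delta_{k,2}$)
is exactly the coefficient of $q^{(a-k_r)k}$ in $q^{-k}P_{12}(k)$ (resp.\
$P_{21}(k)$) coming from $S$, up to the geometric-series term
$q^{ak}c_l/(1-q^k)$ that is exactly absorbed into the constraint
$P_{12}(0)-P_{21}(0)=c$. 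Summing over all congruence classes $S$ reproduces
$P_{12}(k)$ and $P_{21}(k)$, hence $\Delta_{k,1}$ and $\Delta_{k,2}$, and
matches the central charge $c$.

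Finally, irreducibility of the resulting $\widehat{S\frak S_q}$-module is
immediate from Theorem~\ref{teo:MI}, and uniqueness of the decomposition
follows from uniqueness of the decomposition of a quasipolynomial into
exponentials $q^{ax}$ and the fact that different congruence classes in
$\C/(\Z+\tau^{-1}\Z)$ give independent contributions. The main obstacle is
the book-keeping in the third step: matching the telescoped expressions
for $\lambda_i^{(l)},\lambda_{i-1/2}^{(l)}$ with the generating-series
formulas~$(4.17)$--$(4.18)$, in particular handling correctly the
$q^{ak}/(1-q^k)$ term and its interplay with the central charges $c_l$ so
that the constraint $P_{12}(0)-P_{21}(0)=c$ is exactly reproduced; once
this bookkeeping is in place, the argument is parallel to the proof of
Theorem~4.8 in [2].
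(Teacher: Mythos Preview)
Your proposal is correct and follows essentially the same approach as the paper: the paper itself does not give a detailed argument but simply states that the proof follows by the same argument as Theorem~4.8 in [2] using the label formulas for $\Delta_{m,s,\lambda,k,i}$, and your outline spells out precisely that strategy (invert the label assignment via the quasipolynomial decomposition, verify quasifiniteness of each $\lambda_S$ by Theorem~\ref{quasi-{gl}}, and conclude irreducibility via Theorem~\ref{teo:MI}). Your explicit acknowledgment that the argument is parallel to [2, Theorem~4.8] matches the paper's own justification exactly.
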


\end{document}